\documentclass[11pt]{article}

\usepackage[margin=1.05in]{geometry}
\usepackage{amsmath,amssymb,amsthm,mathtools}
\usepackage{graphicx}
\usepackage{float}
\usepackage[usenames,dvipsnames]{xcolor}
\usepackage[colorlinks=true,linkcolor=NavyBlue,citecolor=PineGreen,urlcolor=NavyBlue]{hyperref}
\usepackage{bm}
\usepackage{enumitem}
\usepackage{comment}
\usepackage{booktabs}
\usepackage{tikz}
\usetikzlibrary{arrows.meta,positioning,decorations.pathmorphing,calc}

\newcommand{\cA}{{\cal A}}
\newcommand{\cT}{{\cal T}}
\newcommand{\bS}{{\bm S}}
\newcommand{\bn}{{\bm n}}
\newcommand{\bnu}{{\bm \nu}}
\newcommand{\bB}{{\bm B}}
\newcommand{\bb}{{\bm b}}
\newcommand{\bs}{{\bm s}}

\newcommand{\dd}{{\rm d}}
\newcommand{\ii}{{\rm i}}

\newtheorem{remark}{Remark}

\newtheorem{lemma}{Lemma}
\newtheorem{proposition}{Proposition}

\title{Driven Quantum Stars\\ as Controlled Primitives for Real-Time Spin Dynamics}
\author{Michael (Misha) Chertkov\thanks{Program in Applied Mathematics \& Department of Mathematics, University of Arizona, Tucson, AZ 85721, USA; \texttt{chertkov@arizona.edu}}}
\date{\today}

\begin{document}
\maketitle

\begin{abstract}
Quantum advantage in real-time spin dynamics should be assessed against the strongest relevant classical substitutes, not merely against the fact that the microscopic system is made of qubits.  We propose a physics-based diagnostic for this boundary: reduce a qubit spin model to a spin-Landau--Lifshitz- (LL-) classical sector and organize the residual quantum sector as controlled corrections.  The control parameter is a graph/coordination structure, here realized by a spin star with $d$ leaves and $O(1/d)$ hub-leaf couplings.  In its field- and interaction- homogeneous form the quantum star is a benchmark for the transition from LL-substitutable dynamics to genuinely quantum, discrete-sector interference.  In its fully driven form, with time-dependent local fields and time-dependent bilinear pair tensors, it is the algorithmic primitive for more general quantum-spin structures over trees and loopy graphs. For the coherent-state return amplitude, we prove an exact leaf-elimination lemma and derive a $1/d$ hierarchy for the quantum star.  L0 is a driven one-spin weak-mean-field problem; G1 is a Gaussian nonlocal-in-time influence correction pairing leaf connected two-time kernels with the hub weak two-point function.  On bounded finite-time windows, and away from zeros of the weak amplitudes (dynamical-quantum-phase-transition times), the cumulant series is ordered so that $\log\cA-\log\cA_{\rm L0}=O(1/d)$ and $\log\cA-\log\cA_{\rm L0}-\Delta_{\rm G1}=O(1/d^2)$; the exact remainders follow these orders numerically, with nested-ensemble fits over fully driven anisotropic instances giving slopes $-1.05$ and $-2.03$.  Static homogeneous, aligned, inhomogeneous, and fully driven stars provide validation rungs, and the hierarchy is benchmarked head-to-head against a temporal matrix-product influence-matrix baseline, delineating complementary regimes of coordination and coupling strength.  Unlike rank-controlled compression on a Trotter grid, the hierarchy is ordered by a physical model parameter, is formulated in continuous time, and each of its truncation levels is itself a physical theory, with the Landau--Lifshitz sector as its high-coordination limit.  The same primitive suggests Gaussian influence-message updates on larger trees and loop-corrected extensions on loopy graphs.
\end{abstract}

\tableofcontents

\section{Motivation and problem setting}
\label{sec:motivation}

Quantum computing is often motivated in two complementary ways.  The first is physical: quantum hardware should simulate quantum dynamics more naturally than classical hardware, following Feynman's original intuition and the later circuit model of Hamiltonian simulation \cite{feynman_simulating_1982,lloyd_universal_1996,nielsen_quantum_2010}.  The second is algorithmic: quantum circuits may accelerate formally encoded classical tasks, as in factoring, search, optimization, linear algebra, and related algorithmic primitives \cite{grover_fast_1996,shor_polynomial-time_1997,nielsen_quantum_2010,montanaro_quantum_2016}.  In both settings, a fair assessment of quantum advantage is incomplete unless the proposed quantum procedure is compared with the best classical substitutes available for the observables, tasks, and time windows of interest.  This principle is already central in tensor-network analyses of quantum advantage and in the broader practice of benchmarking quantum devices against rapidly improving classical algorithms \cite{kshetrimayum_quantum_2026}.  It is also familiar from phase-space and truncated-Wigner approaches to spin dynamics \cite{schachenmayer_many-body_2015,mink_hybrid_2022}, from classical Landau--Lifshitz emulation of spin-$1/2$ correlations \cite{kim_emulation_2025}, and from rigorous large-spin or mean-field semiclassical limits \cite{frohlich_semi-classical_2007}.  In particular, if a classical problem is first mapped formally into a qubit dynamics problem, it is natural to ask whether a classical approximation to that qubit dynamics already solves the original problem to the required accuracy.

Extending this perspective we ask whether a qubit spin dynamics problem admits a \emph{\bf spin-LL-classical} surrogate: a Landau--Lifshitz-type dynamics for Bloch vectors or weak spin trajectories obtained from the qubit model itself.  The formal route may use Hubbard--Stratonovich fields \cite{stratonovich_method_1957,hubbard_calculation_1959}, spin-coherent states \cite{radcliffe_properties_1971,klauder_path_1979,auerbach_interacting_1994}, or weak-value/two-boundary spin trajectories \cite{aharonov_how_1988,aharonov_two-time_2005,aharonov_two-time_2017}.\footnote{Here a spin-coherent state means a minimum-uncertainty spin state polarized along a point on the Bloch sphere.  A weak trajectory or weak value means a matrix element conditioned on both initial and final boundary states, normalized by the corresponding transition amplitude; it is generally complex and should not be confused with an ordinary expectation value.}  Since the microscopic spins remain $S=1/2$, the small parameter is not $1/S$.  It must come from structure: high coordination, weak coupling, short memory, high temperature, dephasing, or another model-specific expansion parameter.  In the star problem below the control parameter is $1/d$.

This suggests a diagnostic ladder for classical substitutability.  If the LL sector reproduces the qualitative behavior of the observable, the dynamics may be quantum microscopically but classically substitutable for that observable and time window.  If the first quantum correction is needed but remains perturbative, the dynamics is weakly quantum yet still classically correctable.  If both the LL sector and the first corrections fail qualitatively, the task enters a genuinely quantum regime for the chosen diagnostic.  In this paper the L0 approximation plays the role of the LL-classical sector, while G1 is the first Gaussian quantum-sector correction.

Fig.~\ref{fig:uniform-ll-to-quantum} illustrates this philosophy in the homogeneous star.  The model is simple enough to admit a polynomial Schur-block exact oracle, but a two-population leaf product state populates many total-leaf-spin sectors.  At early times the corrected LL description follows the exact return amplitude closely.  At longer times, on the scale $T=O(d)$, discrete-sector interference and recurrence physics become visible and the classical surrogate ceases to be qualitatively reliable.  This uniform-star experiment is therefore best understood as a controlled \emph{quantum-advantage diagnostic} : it asks when a spin-LL surrogate is enough and when genuinely quantum sector structure matters.

\begin{figure}[H]
\centering
\includegraphics[width=0.82\textwidth]{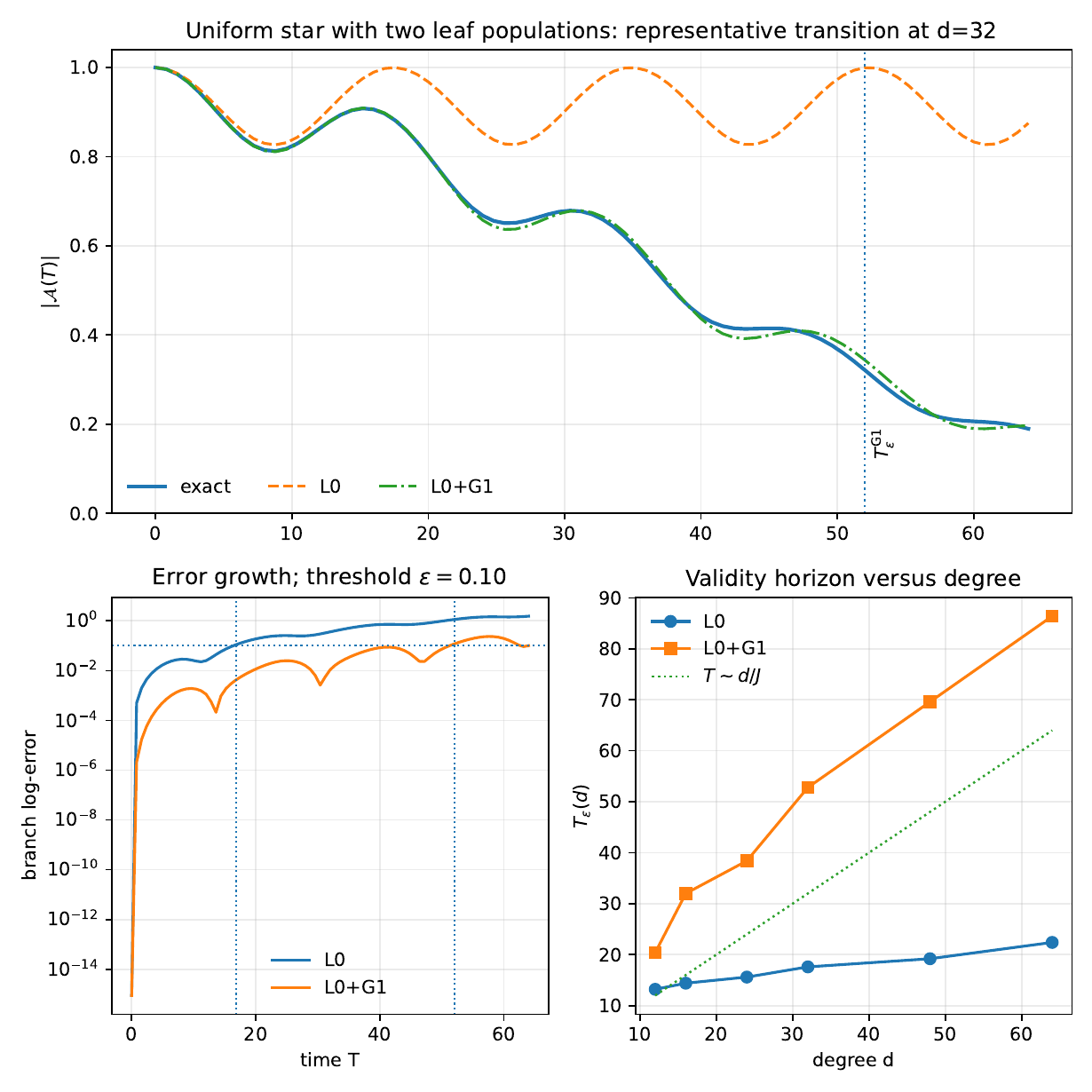}
\caption{Uniform-star diagnostic for the transition from LL-substitutable to genuinely quantum behavior.  The homogeneous star has uniform couplings $H=(J_0/d)\bS_0\cdot\sum_c\bS_c$ -- the static homogeneous special case, Eq.~\eqref{eq:homogeneous-H} of Sec.~\ref{sec:homogeneous-star}, of the general driven star Hamiltonian defined in Eq.~\eqref{eq:general-H} -- and a two-population leaf coherent product state (a deterministic $70\%/30\%$ split of the leaf Bloch directions between two fixed orientations).  The upper panel compares the exact Schur-block return amplitude with L0 and L0+G1 for $d=32$.  The lower-left panel shows the branch-continuous log-error and the threshold defining the validity horizon.  The lower-right panel shows that G1 extends the validity horizon substantially: over the scanned decade $d\in[12,64]$ the fitted growth exponents are $T_\varepsilon^{\rm L0}\sim d^{0.30}$ and $T_\varepsilon^{\rm G1}\sim d^{0.82}$, the latter approaching the linear $T=O(d/J)$ revival scale (dotted guide).  The figure is a diagnostic benchmark, not the most general algorithmic setting.  All ingredients are specified later in the paper: the return amplitude in Eq.~\eqref{eq:return-amplitude}, the model and its polynomial Schur-block exact oracle in Secs.~\ref{sec:homogeneous-star}--\ref{sec:homogeneous-coherent} and Appendix~\ref{app:oracles}, the L0 and G1 approximants in Sec.~\ref{sec:hierarchy}, and the branch-continuous log-error together with the horizon protocol in Appendix~\ref{app:numerics}.
\label{fig:uniform-ll-to-quantum}}
\end{figure}

The second role of the star is algorithmic.  The fully driven star, not the uniform benchmark, is the primitive needed for message-passing extensions.  A subtree attached to a parent spin is summarized by a scalar influence, a weak mean trajectory\footnote{By a ``weak mean trajectory'' we mean, following the weak-value and two-state-vector terminology of \cite{aharonov_how_1988,aharonov_two-time_2005,aharonov_two-time_2017}, a one-time weak value conditioned on both the initial and final amplitude boundaries, e.g.
\[
S^\alpha_w(t)=
\frac{\langle \psi_f|U(T,t)S^\alpha U(t,0)|\psi_i\rangle}
{\langle \psi_f|U(T,0)|\psi_i\rangle}.
\]
The corresponding two-time weak kernel is the connected time-ordered fluctuation around this weak mean trajectory.}, and a connected two-time kernel.  This is the quantum-spin analogue of replacing an environment by a message.  On trees this leads to a dynamic-programming structure; on loopy graphs it points toward BP/GBP and loop-corrected methodology \cite{chertkov_loop_2006,chertkov_loop_2006-1}, originally developed for probabilistic (positive-valued) densities but algebraically more general and thus recently adapted to complex-valued tensor-network constructions \cite{tindall_contracting_2026,evenbly_loop_2026,midha_beyond_2025,midha_belief_2026}.

We therefore study a spin-star graph consisting of one hub spin and $d$ leaf spins, all spin-$1/2$.  The hub-leaf couplings are scaled as $1/d$, so that the leaf-induced mean field is $O(1)$ while its connected fluctuations are suppressed by coordination.  The central observable is the coherent-state return amplitude 
\begin{equation}
\cA(T)=\langle \Psi_0|U(T,0)|\Psi_0\rangle,
\qquad
|\Psi_0\rangle=|\bn_0\rangle\otimes\bigotimes_{c=1}^d|\bnu_c\rangle,
\label{eq:return-amplitude}
\end{equation}
where $|\bn\rangle$ is a spin-coherent state polarized along the unit vector $\bn$, i.e., the normalized spin-$1/2$ state obeying $(\bn\cdot\bS)|\bn\rangle=\tfrac12|\bn\rangle$.  The unit Bloch vectors $\bn_0$ and $\bnu_c$ specify the boundary states of the hub and of the $c$-th leaf, respectively, and the same product state $|\Psi_0\rangle$ serves as both the initial and the final boundary of the amplitude.  Here and throughout,
\begin{equation}\label{eq:propagator}
U(t_2,t_1)=\cT\exp\left[-\ii\int_{t_1}^{t_2}H(t)\,\dd t\right],
\end{equation}
is the unitary propagator generated by the driven star Hamiltonian $H(t)$, written in its general form in Eq.~\eqref{eq:general-H} of Sec.~\ref{sec:general-star}, and $\cT$ denotes time ordering.  This choice is deliberate.  The return amplitude lives on a single time contour, has a direct interferometric interpretation, and is sensitive to phase-coherent many-spin dynamics.

A comment on the choice of the time contour is in order, since it fixes the class of objects used throughout the paper.  The amplitude \eqref{eq:return-amplitude} is defined on a \emph{single} forward contour $[0,T]$: it is the Loschmidt amplitude of the product state, accessible by ancilla interferometry (Hadamard-test- or Ramsey-type protocols).  Many quantities of primary physical interest -- expectation values $\langle S_0^\alpha(t)\rangle$, two-time correlation functions, dynamics in the presence of dephasing -- instead live on the doubled Keldysh closed-time-path contour, with a forward and a backward branch.  Working on the single contour is a deliberate technical simplification, not a limitation of principle: it is the minimal arena in which every structural element of the method already appears -- exact spatial elimination of the leaves, complex two-boundary (weak) trajectories and kernels, and the $1/d$ ordering of the cumulant series -- with the lightest possible bookkeeping, one world line and one set of weak objects per leaf.  Each element admits a closed-contour counterpart.  The leaves commute with one another and couple to the rest of the graph only through the hub on \emph{both} branches, so the elimination lemma of Sec.~\ref{sec:leaf-elimination} extends verbatim to the Keldysh contour, with the cumulants becoming contour ordered: the weak mean trajectory acquires a branch index and the two-time kernel becomes a $2\times2$ matrix of branch blocks -- a Feynman--Vernon influence functional -- while the $1/d$ counting is unchanged on both branches.  The doubled bookkeeping, required to promote the diagnostic from amplitudes to physical expectation values and to open-system dynamics, is deferred to a separate treatment; see Remark~\ref{rem:scope} and the outlook in Sec.~\ref{sec:outlook}.

A second orienting comment concerns the role of the star for larger graphs.  The closed number \eqref{eq:return-amplitude}, with the hub contracted against $\langle\bn_0|\cdots|\bn_0\rangle$, is the star-level version of a more general \emph{open} primitive.  In the tree recursion of Sec.~\ref{sec:trees-loops}, the identical leaf-elimination step is applied at a vertex without closing that vertex into a number: after its children are eliminated, the vertex reports not the scalar $\cA$ but its own dressed weak objects -- a scalar log-contribution, a weak mean trajectory, and a connected two-time kernel, evaluated under the children-dressed evolution -- which are passed as a message to its parent.  The closure against the boundary coherent states is performed once, at the tree root.  The star studied in this paper is thus simultaneously the simplest closed benchmark and the local open update rule.

\begin{figure}[H]
\centering
\includegraphics[width=0.76\textwidth]{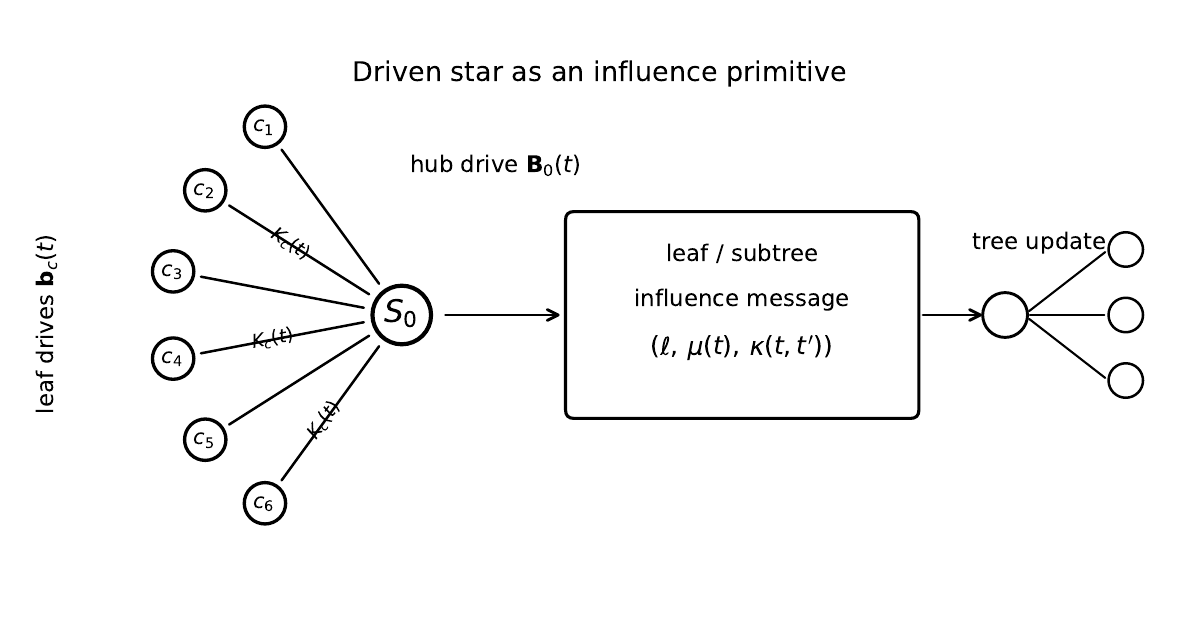}
\caption{The driven star is the algorithmic primitive.  Leaf elimination converts $d$ quantum leaves into a weak mean trajectory and a two-time influence kernel acting on the hub.  On larger trees the same object becomes a directed message from a subtree to its parent.}
\label{fig:star-primitive}
\end{figure}

Fig.~\ref{fig:star-primitive} also fixes the notation used below for the algorithmic interpretation: after elimination, a leaf or subtree contributes a scalar factor, a weak mean trajectory, and a two-time connected kernel.  Throughout the paper, \emph{elimination} means the exact contraction of the leaf world lines out of the amplitude \eqref{eq:return-amplitude}: a quantum sum over the histories of the eliminated spins, conditioned on their coherent boundary data.  It is the single-contour, fixed-boundary analogue of marginalization, performed at the amplitude level rather than at the density-matrix level -- no partial trace is involved.  All $d$ leaf spins are eliminated in one step while the hub spin is retained; leaf $c$ is summarized \emph{exactly} by its free return amplitude $\tau_c$, its weak mean trajectory $\mu_c(t)$, its connected two-time kernel $\kappa_c(t,t')$, and the tower of higher connected cumulants -- these objects are defined in Eqs.~\eqref{eq:tau}--\eqref{eq:kappa-general} of Sec.~\ref{sec:leaf-elimination}, the elimination itself is Lemma~\ref{lem:elimination}, and the $1/d$ organization and Gaussian truncation of the cumulant tower is the subject of Sec.~\ref{sec:hierarchy}.  
The open tree-level use of the same step, and the single-contour nature of all objects, were discussed above.  Fig.~\ref{fig:method-summary} assembles the full pipeline of the method -- the continuous-time model, the exact elimination, the $1/d$-ordered truncation with its Landau--Lifshitz asymptotic, and the closed/open closure -- and anticipates the structural comparison with temporal tensor networks in Sec.~\ref{sec:positioning} (Table~\ref{tab:message-comparison}) and Sec.~\ref{sec:tmps-baseline}.

\begin{figure}[H]
\centering
\resizebox{0.98\textwidth}{!}{%
\begin{tikzpicture}[
  stage/.style={draw=black!60, rounded corners=3pt, align=center, inner sep=6pt, font=\small, fill=black!3},
  lab/.style={font=\footnotesize\itshape, text=black!70, align=center},
  note/.style={font=\footnotesize, text=PineGreen!80!black, align=center},
  flow/.style={-{Stealth[length=2.6mm]}, thick, black!70},
  ]
\begin{scope}[shift={(0,0)}]
  \node[circle,draw,fill=NavyBlue!25,inner sep=2.4pt] (hub) at (0,0) {};
  \foreach \a in {90,141,193,244,296,347}{
    \node[circle,draw,fill=black!12,inner sep=1.6pt] (l\a) at (\a:0.95) {};
    \draw[black!55] (hub)--(l\a);}
  \node[font=\scriptsize] at (0,-0.34) {$\bS_0$};
  \node[font=\scriptsize] at (75:1.28) {$\bS_c$};
  \node[font=\scriptsize, black!70] at (18:1.52) {$K_c(t)$};
\end{scope}
\node[stage, anchor=north, text width=3.1cm] (s1) at (0,-1.55)
  {\textbf{I. driven star}\\[1pt] $H(t)$: fields $\bB_0(t)$, $\bb_c(t)$,\\ pair tensors $K_c(t)=O(1/d)$\\[1pt] {\footnotesize\itshape continuous time}};
\node[stage, right=1.25cm of s1, text width=3.5cm] (s2)
  {\textbf{II. exact leaf elimination}\\ (Lemma~\ref{lem:elimination}, single contour)\\[2pt]
   per leaf: $\tau_c,\ \mu_c(t),\ \kappa_c(t,t'),\dots$\\[1pt]
   {\footnotesize\itshape continuum weak objects;}\\ {\footnotesize\itshape grid $=$ quadrature choice only}};
\draw[flow] (s1) -- node[lab, above=1pt]{exact} (s2);
\begin{scope}[shift={($(s2.north)+(0,0.85)$)}]
  \draw[black!60,-{Stealth[length=1.6mm]}] (-1.0,0)--(1.05,0) node[right,font=\scriptsize]{$t$};
  \draw[NavyBlue, thick, decorate, decoration={snake, amplitude=.5mm, segment length=2.6mm}] (-0.95,0.22)--(0.95,0.22);
  \node[font=\scriptsize, NavyBlue] at (0,0.55) {$\mu_c(t)$,\ \ $\kappa_c(t,t')$};
\end{scope}
\node[stage, right=1.25cm of s2, text width=4.05cm] (s3)
  {\textbf{III. $1/d$-ordered truncation}\\[3pt]
   {\footnotesize L0: weak LL sector, error $O(1/d)$}\\
   {\footnotesize $+\,$G1: Gaussian influence, $O(1/d^{2})$}\\
   {\footnotesize\color{black!50}$+\,$G2, \dots: higher cumulants}\\[3pt]
   {\footnotesize\itshape each level a physical theory}};
\draw[flow] (s2) -- node[lab, above=1pt]{cumulants} (s3);
\node[note, text width=4.1cm] at ($(s3.south)+(0,-0.62)$)
  {$d\to\infty$: Landau--Lifshitz\\ classical sector (physical asymptotic)};
\node[stage, right=1.25cm of s3, text width=3.6cm] (s4)
  {\textbf{IV. closure}\\[2pt]
   \emph{closed} (star): $\cA(T)$ --- benchmark, diagnostic\\[2pt]
   \emph{open} (tree): message $\mathfrak m_{i\to p}=(\ell,\mu,\kappa)$, additive aggregation};
\draw[flow] (s3) -- (s4);
\begin{scope}[shift={($(s4.north)+(0,0.8)$)}]
  \node[circle,draw,fill=black!12,inner sep=1.4pt] (t1) at (-0.55,-0.2) {};
  \node[circle,draw,fill=black!12,inner sep=1.4pt] (t2) at (-0.15,-0.2) {};
  \node[circle,draw,fill=NavyBlue!25,inner sep=1.9pt] (ti) at (-0.35,0.25) {};
  \node[circle,draw,fill=black!25,inner sep=1.9pt] (tp) at (0.45,0.62) {};
  \draw[black!55] (t1)--(ti)--(t2); \draw[-{Stealth[length=1.8mm]},PineGreen!80!black,thick] (ti)--(tp);
  \node[font=\scriptsize,PineGreen!80!black] at (0.35,0.1) {$\mathfrak m_{i\to p}$};
\end{scope}
\end{tikzpicture}}%
\caption{Summary of the method.  (I) The model is the fully driven star in \emph{continuous} time.  (II) Leaf elimination (Lemma~\ref{lem:elimination}) is exact and produces continuum weak objects per leaf -- a scalar, a weak mean trajectory, and a connected two-time kernel; the time grid enters only as a quadrature choice at numerical evaluation.  (III) The cumulant tower is ordered by the physical model parameter $1/d$, and each truncation level is itself a physical theory: L0 is the weak Landau--Lifshitz classical sector, G1 the Gaussian (noise-plus-response) influence correction; the $d\to\infty$ limit is the LL dynamics.  (IV) Closing the hub against its boundary state gives the benchmark amplitude $\cA(T)$; leaving it open gives the additive message passed to a parent in the tree recursion of Sec.~\ref{sec:trees-loops}.  Table~\ref{tab:message-comparison} contrasts this pipeline with rank-controlled temporal-MPS compression.}
\label{fig:method-summary}
\end{figure}

\section{Brief positioning relative to prior work}
\label{sec:positioning}

The construction above touches several literatures, but the intended contribution is focused.  It is not that a star graph can be solved in special cases, nor that dynamic programming is possible on trees.  The contribution is to combine three ideas in one controlled spin-$1/2$ setting: an emergent spin-LL-classical sector, a $1/d$ expansion of quantum corrections around it, and a fully driven star primitive suitable for message passing.

\paragraph{From quantum Heisenberg dynamics to LL-classical surrogates.}
The functional-integral route to high-temperature Heisenberg dynamics already appeared in the 1994--95 work of Chertkov and Kolokolov \cite{chertkov_long-time_1994,chertkov_equilibrium_1995}, building on Kolokolov's single-spin transformations \cite{kolokolov_functional_1986,kolokolov_functional_1990}.  There the infinite-temperature spin correlator was related, under controlled physical assumptions, to an average over classically evaluated vector fields.  The modern spin-transport literature arrived at a closely related conclusion by a different route: numerical and experimental studies of spin-$1/2$ chains revealed anomalous high-temperature transport \cite{znidaric_transport_2011,ljubotina_spin_2017,ljubotina_kardarparisizhang_2019,scheie_detection_2021,wei_quantum_2022}, and generalized-hydrodynamic and classical-field interpretations connected the quantum dynamics to classical Landau--Lifshitz solitons and KPZ scaling \cite{gopalakrishnan_kinetic_2019,de_nardis_superdiffusion_2020,das_nonlinear_2020}.  We use this history as motivation, not as an assumption: the paper asks how far such LL-classical substitutability can be made quantitative on a finite graph and where the leading quantum corrections become necessary.

\paragraph{Coherent states, large-$S$, and why $1/S$ is not the control parameter here.}
Spin coherent states provide the natural bridge between spin operators and classical Bloch-vector trajectories \cite{radcliffe_properties_1971,klauder_path_1979,auerbach_interacting_1994,polkovnikov_phase_2010}.  Schematically, for a single spin driven by a field $\bm\Xi(t)$,
\begin{equation}
\langle \bn_f|\cT e^{-\ii\int_0^T \bm\Xi(t)\cdot\bS\,\dd t}|\bn_i\rangle
=\int_{\bn(0)=\bn_i}^{\bn(T)=\bn_f}\!\mathcal D\bn\; e^{\ii S_{\rm B}[\bn]-\ii\int_0^T H_{\rm cl}(\bn(t),t)\,\dd t},
\label{eq:coherent-state-schematic}
\end{equation}
where $S_{\rm B}$ is the Berry phase and the stationary path obeys a Landau--Lifshitz equation, $\dot{\bn}=\bn\times\bm\Xi$ in the simplest convention \cite{landau_theory_1935,lakshmanan_fascinating_2011}.  In the usual semiclassical use of \eqref{eq:coherent-state-schematic}, large spin $S$ suppresses fluctuations.  That is not our regime: all microscopic spins are qubits.  The star instead supplies a different small parameter.  With $K_c=O(1/d)$, the hub sees an $O(1)$ mean field from many leaves, while the connected $k$-th cumulant of the summed leaf influence scales as $O(d^{1-k})$.  Thus L0 is the LL-classical sector selected by large coordination, and G1 is the leading finite-coordination quantum fluctuation.

\paragraph{Phase-space spin dynamics and classical substitutes.}
Discrete and continuous Wigner approaches to spin-$1/2$ dynamics are important neighboring classical substitutes \cite{schachenmayer_many-body_2015,wurtz_cluster_2018,kunimi_performance_2021,mink_hybrid_2022}.  They also show that spin-$1/2$ dynamics can be accurately approximated for selected observables when interaction range, coordination, or coarse-graining suppresses genuinely quantum interference.  Our hierarchy is complementary: instead of postulating a phase-space sampling rule, we derive the LL mean and the first connected weak two-time correction by eliminating leaves of the qubit Hamiltonian itself.  This makes the success or failure of the surrogate observable-dependent and testable through the log-amplitude error.

\paragraph{Dynamic programming, quantum BP, and space--time tensor networks.}
The tree structure connects the work to belief propagation (BP) ideas \cite{bethe_statistical_1935,pearl_probabilistic_1988,yedidia_constructing_2003,wainwright_graphical_2008}.  Classical BP is exact on trees, while loop corrections quantify what is lost on loopy graphs \cite{chertkov_loop_2006,chertkov_loop_2006-1,chertkov_gauges_2020,chertkov_inferlo_2024}.  Quantum versions appear in several forms: imaginary-time quantum cavity methods on Bethe lattices \cite{laumann_cavity_2008,krzakala_path-integral_2008}, Hastings' thermal quantum belief propagation \cite{hastings_quantum_2007}, and quantum-message BP for communication and inference settings \cite{poulin_belief-propagation_2008,rengaswamy_belief_2021}.  Real-time simulation is often handled by Suzuki--Trotter decompositions \cite{trotter_product_1959,suzuki_generalized_1976}, which convert the problem into a space--time tensor network.  A powerful modern line eliminates an environment exactly and compresses the resulting temporal object -- the influence matrix or process tensor -- as a matrix product state (MPS) in the time direction \cite{banuls_matrix_2009,strathearn_efficient_2018,lerose_influence_2021,sonner_influence_2021}; related influence-functional and tensor-network BP methods compress the temporal direction by MPS or cluster corrections \cite{park_simulating_2025,tindall_contracting_2026,evenbly_loop_2026,midha_beyond_2025,midha_belief_2026}.  Our star primitive shares the elimination-then-compression \emph{logic} of this line, but the sharing stops at the logic: the two constructions differ in three structural respects, summarized in Fig.~\ref{fig:method-summary} and Table~\ref{tab:message-comparison}:
\begin{enumerate}
\item \underline{\emph{What controls the truncation}}.  The influence-matrix hierarchy is ordered by the bond dimension $\chi$ -- an algorithmic rank with no physical meaning of its own, whose required value is discovered a posteriori from the temporal entanglement of the computed object.  The cumulant hierarchy is ordered by $1/d$ -- a model parameter read off the graph before any computation -- and every truncation level is itself a physical theory: L0 is the weak Landau--Lifshitz classical sector, G1 is the Gaussian (noise-plus-response) influence correction, and the $d\to\infty$ limit is the LL dynamics naturally attributed to spins.  This is also why the hierarchy doubles as the quantum-advantage diagnostic of Sec.~\ref{sec:motivation}: the truncation ladder \emph{is} the classical-to-quantum ladder, a reading that a rank expansion cannot support.

\item \underline{\emph{Where the objects live}.}  The temporal tensor network exists only after Trotterization: the influence matrix is defined on a specific grid, changing or refining the grid rebuilds every tensor and every compression, and the discretization error is intrinsic to the message itself.  Our primitive is formulated in continuous time -- Lemma~\ref{lem:elimination} is a continuum identity, and $(\ell,\mu(t),\kappa(t,t'))$ are functions -- so the grid enters only as a quadrature choice at evaluation time, and adaptive, nonuniform, or spectral representations of the message require no rebuilding of the underlying objects. 

\item \underline{\emph{How messages compose}:} cumulant messages aggregate additively, at a size independent of $d$, while influence matrices compose multiplicatively, with recompression after every member.  None of this makes the influence-matrix message weaker where it is strong -- it is nonperturbative in the coupling and systematically improvable in $\chi$ at fixed degree -- and Sec.~\ref{sec:tmps-baseline} makes the comparison quantitative on the driven star, including the measured crossover between the two regimes.  
\end{enumerate}
A further structural remark: Trotterizing even a tree makes the {\em space--time} network loopy -- each hub--leaf edge generates a ladder of temporal four-cycles -- so plain BP on the discretized network is already approximate there, whereas the continuous-time leaf elimination of Sec.~\ref{sec:leaf-elimination} resums all hub--leaf temporal ladders exactly and confines the approximation to the cumulant truncation.

\begin{table}[t]
\centering
\small
\begin{tabular}{@{}p{0.185\textwidth}p{0.39\textwidth}p{0.35\textwidth}@{}}
\toprule
 & \textbf{Gaussian cumulant message \hspace{1.5cm} (this work)} & \textbf{Temporal-MPS influence matrix} \\
\midrule
truncation parameter & coordination $1/d$: a model parameter, read off the graph (describing interactions of quantum spins) a priori & bond dimension $\chi$: an algorithmic rank, set a posteriori by temporal entanglement \\[2pt]
meaning of levels & each level a physical theory: L0 $=$ weak-LL classical sector; G1 $=$ Gaussian (noise$+$response) influence; $d\to\infty$ limit $=$ LL & intermediate $\chi$ carry no independent physical identity; $\chi=2^d$ exact for the discretized dynamics \\[2pt]
error control & a priori ordering $O(1/d)$, $O(1/d^2)$ (Proposition~\ref{prop:remainder}); verified slopes $-1.0$, $-2.0$ & a posteriori, via discarded singular-value weight \\[2pt]
time & continuous: $\mu(t)$, $\kappa(t,t')$ are functions; grid $=$ quadrature; adaptive/nonuniform refinement without rebuilding & discrete: network defined by the Trotter grid; grid change rebuilds all tensors; Trotter floor intrinsic to the message \\[2pt]
composition & additive: sub-trees aggregate by summation; size independent of $d$ & multiplicative: MPO product per member, then recompression \\[2pt]
preferred regime & high coordination, weak-to-moderate coupling; trees & small or strongly coupled clusters; nonperturbative accuracy in the coupling \\
\bottomrule
\end{tabular}
\caption{Two truncations of the same exact influence.  Structural comparison of the Gaussian cumulant message developed in this paper with the temporal-MPS influence-matrix message; the quantitative head-to-head comparison, including the measured crossover $d_*(\chi)$, is in Sec.~\ref{sec:tmps-baseline} and Fig.~\ref{fig:tmps-baseline}.}
\label{tab:message-comparison}
\end{table}

\section{The general driven star}
\label{sec:general-star}

The main object of the paper is the time-dependent pairwise spin-star Hamiltonian
\begin{equation}
H(t)=\bB_0(t)\cdot\bS_0+
\sum_{c=1}^d\left(\bb_c(t)\cdot\bS_c+\bS_0^{\mathsf T}K_c(t)\bS_c\right),
\qquad \|K_c(t)\|=O(1/d).
\label{eq:general-H}
\end{equation}
Here $K_c(t)$ is a real $3\times3$ matrix. This includes isotropic exchange, anisotropic exchange, XXZ/Ising limits, Dzyaloshinskii--Moriya terms, and symmetric anisotropy. The fields $\bB_0(t)$ and $\bb_c(t)$ are arbitrary bounded functions of time. Constants in the Hamiltonian only contribute phases and will be ignored.  The spin-star geometry itself has a long history as a central-spin and spin-bath model in open-system and entanglement studies \cite{prokofev_theory_2000,breuer_non-markovian_2004,hutton_mediated_2004,gaudin_diagonalisation_1976}; the present work differs in observable (a single-contour return amplitude rather than the two-contour/Keldysh dynamics), in the fully driven anisotropic setting, and in the $1/d$ organization of the corrections.

The return amplitude \eqref{eq:return-amplitude} is generated by this Hamiltonian through the propagator \eqref{eq:propagator}. The scaling \(K_c=O(1/d)\) in Eq.~(\ref{eq:general-H}) is the central assumption.  Anticipating the leaf elimination below, each leaf contributes a weak mean trajectory \(\mu_c(t)\), and the hub sees the summed weak mean field
\begin{equation}
\mathbf B_{\rm eff}(t)=\mathbf B_0(t)+
\sum_{c=1}^d K_c(t)\,\mu_c(t).
\end{equation}
The \(1/d\) scaling keeps this field finite, while the summed connected cumulants are ordered as
\[
\sum_c K_c^2\,\kappa_c = O(1/d),\qquad
\sum_c K_c^3\,\kappa_c^{(3)} = O(1/d^2),
\]
and similarly for higher cumulants.  No assumption of time independence, homogeneity, or integrability is made in the formulation.

In the statements below, the \emph{finite-time bounded-drive regime} means that \(\bB_0(t)\), \(\bb_c(t)\), \(K_c(t)\), and a finite number of their derivatives are uniformly bounded on \([0,T]\), with \(\|K_c(t)\|\le C/d\).  Error estimates are understood on fixed finite time windows and on continuous logarithmic branches of the boundary amplitudes used to define the weak trajectories and weak kernels; in particular, the relevant transition amplitudes are assumed not to cross zero.  This assumption has physical content: the return amplitude \eqref{eq:return-amplitude} is a Loschmidt amplitude of a product state, so its zeros are exactly the Fisher zeros signalling dynamical quantum phase transitions (DQPTs) \cite{heyl_dynamical_2013,heyl_dynamical_2018}. In other words, DQPT times mark the boundaries of the domain in which perturbative treatments -- and the present $1/d$ cumulant expansion in particular -- are uniformly controlled.  As such a time is approached, transition amplitudes of the same Loschmidt family -- the free-leaf and weak-mean-field hub amplitudes appearing in the denominators of the weak objects \eqref{eq:mu}--\eqref{eq:kappa-general} -- tend to zero, the weak trajectories and kernels develop poles, hypothesis (H2) of Proposition~\ref{prop:remainder} fails, and the constants in the error estimates diverge; between consecutive DQPT times, on each continuous branch of the logarithm, the hierarchy is expected to remain controlled.  How a finite-size qubit system, and the finite-$d$ hierarchy with it, behaves as such a transition is approached -- how fast the accuracy degrades, over what time window, and whether the expansion can be repaired across the transient -- is beyond the scope of this work and is posed as a target for path forward discussion in Sec.~\ref{sec:outlook}.

\section{Exact leaf elimination}
\label{sec:leaf-elimination}

For each leaf define its free driven one-spin propagator
\begin{equation}
U_c(t_2,t_1)=\cT\exp\left[-\ii\int_{t_1}^{t_2}\bb_c(s)\cdot\bS\,\dd s\right].
\label{eq:leaf-prop}
\end{equation}
The associated leaf return amplitude, one-time weak mean trajectory, and connected two-time weak kernel are
\begin{align}
\tau_c&=\langle\bnu_c|U_c(T,0)|\bnu_c\rangle,
\label{eq:tau}\\
\mu_c^\beta(t)&=
\frac{\langle\bnu_c|U_c(T,t)S^\beta U_c(t,0)|\bnu_c\rangle}{\tau_c},
\label{eq:mu}\\
\kappa_c^{\beta\delta}(t,t')&=
\frac{
\langle\bnu_c|
U_c(T,0)\,
\cT\!\left[S_{I,c}^\beta(t)S_{I,c}^\delta(t')\right]
|\bnu_c\rangle
}{\tau_c}
-\mu_c^\beta(t)\mu_c^\delta(t'),
\label{eq:kappa-general}
\end{align}
where
\[
S_{I,c}(t)=U_c(0,t)S\,U_c(t,0)
\]
is the leaf spin in the interaction picture defined by the free leaf drive \(\bb_c(t)\).  These are one-spin, hence \(2\times2\), objects and are cheap to compute numerically even for fully time-dependent drives; Appendix~\ref{app:weak-objects} gives the explicit forward--backward propagator formulas used in the code.

The exact spatial elimination of the leaves is the content of the following lemma.  Because the source that enters each leaf functional is operator valued -- it contains the hub spin -- the exponential of $W_c$ must be understood as a formal identity: both sides are defined by their expansions in powers of the hub--leaf coupling, with every hub insertion generated by the expansion kept under a single global time-ordering symbol.  With this convention the identity is exact term by term, to all orders.

\begin{lemma}[Exact leaf elimination]
\label{lem:elimination}
Let $H(t)$ be the driven star Hamiltonian \eqref{eq:general-H} with bounded drives on $[0,T]$, and let $|\Psi_0\rangle$ be the product coherent state of Eq.~\eqref{eq:return-amplitude}.  Then $\cA(T)$ is
\begin{equation}
\cA(T)=\left(\prod_{c=1}^d\tau_c\right)
\left\langle \bn_0 \,\middle|\,
\cT\exp\left[
-\ii\int_0^T \bB_0(t)\cdot\bS_0\,\dd t
+\sum_{c=1}^d W_c\big[K_c^{\mathsf T}(\cdot)\bS_0(\cdot)\big]
\right]
\,\middle|\, \bn_0\right\rangle ,
\label{eq:exact-elimination}
\end{equation}
where $\tau_c$  is the free leaf return amplitude \eqref{eq:tau} and $W_c$ is the connected cumulant-generating functional of leaf $c$ defined by
\begin{equation}
\exp \left(W_c[h_c]\right)=
\frac{
\left\langle\bnu_c\,\middle|\,
U_c(T,0)\,
\cT\exp\left[-\ii\int_0^T h_c^\beta(t)S_{c,I}^\beta(t)\,\dd t\right]
\,\middle|\,\bnu_c\right\rangle
}{\tau_c},
\label{eq:leaf-W-def}
\end{equation}
where $S_{c,I}(t)=U_c(0,t)S_c\,U_c(t,0)$ is the leaf spin in the interaction picture generated by the free drive
\(\bb_c(t)\).
Both sides in Eq.~(\ref{eq:exact-elimination}) are understood as formal power series in the couplings $K_c$ under the global time ordering.
\end{lemma}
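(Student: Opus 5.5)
We pass to the interaction picture with respect to the free leaf drives and then expand the interaction in a Dyson series. Write $H(t)=H_0(t)+H_{\rm int}(t)$ with $H_0(t)=\bB_0(t)\cdot\bS_0+\sum_c\bb_c(t)\cdot\bS_c$ and $H_{\rm int}(t)=\sum_c\bS_0^{\mathsf T}K_c(t)\bS_c$. Only the leaf part $U_L(t,0)=\bigotimes_c U_c(t,0)$ of the free evolution is factored out; the hub drive is kept in the interaction-picture generator. Then
\[
U(T,0)=U_L(T,0)\,\cT\exp\Big[-\ii\int_0^T\Big(\bB_0(t)\cdot\bS_0+\sum_c \bS_0^{\mathsf T}(t)K_c(t)\bS_{c,I}(t)\Big)\dd t\Big],
\]
where $\bS_{c,I}(t)=U_c(0,t)\bS_cU_c(t,0)$. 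This identity follows from differentiating both sides in $T$. Here $\bS_0(t)$ is simply the Schrödinger hub operator placed at time $t$ under the time ordering, which is the convention of the statement.

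Next expand the time-ordered exponential in powers of the couplings $K_c$. A generic term carries time-ordered hub insertions $\bB_0\cdot\bS_0$ and $\bS_0^{\mathsf T}K_c\bS_{c,I}$ at ordered times. Leaf operators of different leaves commute with each other and with every hub operator. Inside each ordered monomial we can therefore move all leaf-$c$ operators into their own tensor factor, keeping their relative time order, without disturbing the hub order. Sandwiching between $|\Psi_0\rangle$ and using $\langle\Psi_0|U_L(T,0)=\bigotimes_c\langle\bnu_c|U_c(T,0)$, each leaf factor becomes the ordered matrix element $\langle\bnu_c|U_c(T,0)\cT[S^{\beta_1}_{c,I}(t_1)\cdots]|\bnu_c\rangle$. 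Its coefficient is the hub-valued source $h_c(t)=K_c^{\mathsf T}(t)\bS_0(t)$, which sits under the global ordering. Resumming the insertions leaf by leaf with the combinatorial factors $1/n!$ and the ordered simplices gives
\[
\cA(T)=\Big\langle \bn_0\Big|\cT\Big\{e^{-\ii\int\bB_0\cdot\bS_0}\prod_c \big\langle\bnu_c\big|U_c(T,0)\cT e^{-\ii\int h_c^\beta S^\beta_{c,I}}\big|\bnu_c\big\rangle\Big\}\Big|\bn_0\Big\rangle .
\]
The multinomial splitting of a time-ordered exponential of a sum into a time-ordered product of exponentials is legitimate here because, inside $\cT$, the factors behave as commuting symbols.

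Now divide each leaf factor by $\tau_c$ and use the definition \eqref{eq:leaf-W-def}, $\langle\cdots\rangle/\tau_c=\exp W_c[h_c]$. Pulling out $\prod_c\tau_c$ gives a time-ordered product of $e^{-\ii\int\bB_0\cdot\bS_0}$ and the factors $e^{W_c}$. Under the global $\cT$ this product can be written as a single exponential of the sum, which is \eqref{eq:exact-elimination}.

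The main obstacle is making these manipulations meaningful when the sources are operator-valued and noncommuting. The logarithm $W_c$ must be defined as the formal cumulant series, with each hub-valued monomial placed under the outer $\cT$ before any product is formed. With that convention:

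\begin{itemize}
\item the exponent/product manipulations are identities in the free algebra of time-labelled symbols;
\item realizing the symbols as hub operators commutes with the ordering;
\item the moment–cumulant relation for $e^{W_c}$ holds order by order in $K_c$.
\end{itemize}

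I would verify this final step by matching the coefficient of each monomial $\prod_c K_c^{n_c}$ on both sides. Bounded drives on $[0,T]$ guarantee that each coefficient is a finite integral over the time simplex. Since the statement is only claimed as an identity of formal power series, no convergence argument is needed.
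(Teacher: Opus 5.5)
Your proposal is correct and follows the paper's proof essentially step for step. Both pass to the interaction picture with respect to the leaf drives only, expand in the couplings, factorize the leaf matrix elements using commutativity across tensor factors, and resum leaf by leaf into $\tau_c e^{W_c}$ via the moment--cumulant relation, with all hub insertions kept under the single global $\cT$. Your explicit device of treating hub insertions as commuting time-labelled symbols inside $\cT$, so that exponentials, logarithms, and the product-to-sum step are identities in a commutative formal-series algebra before being realized as hub operators, is simply a more careful statement of the paper's ``formal identity'' convention, not a different route.
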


\begin{proof}
Pass to the interaction picture with respect to the free leaf Hamiltonian $H_{\rm leaf}(t)=\sum_c\bb_c(t)\cdot\bS_c$.  Then
\begin{align*}
\cA(T)\!& =\!\Big\langle \bn_0\Big|\otimes\bigotimes_c\Big\langle\bnu_c\Big|\,
\\ & U_{\rm leaf}(T,0)\;
\cT\exp\left[-\ii\int_0^T\Big(\bB_0(t)\cdot\bS_0
\!+\!\sum_c S_0^\alpha K_c^{\alpha\beta}(t) S^\beta_{c,I}(t)\Big)\dd t\right]
\,\Big|\bn_0\Big\rangle\otimes\bigotimes_c\Big|\bnu_c\Big\rangle,
\end{align*}
with $S_{c,I}$ the interaction-picture leaf spin of Sec.~\ref{sec:leaf-elimination} and $U_{\rm leaf}=\prod_c U_c(T,0)$.  Expand the time-ordered exponential in powers of the couplings.  Each term is a time-ordered product of hub factors $S_0^\alpha(t_r)$ and leaf factors $S_{c,I}^\beta(t_r)$.  Operators belonging to different leaves act on distinct tensor factors and commute, and each interaction term couples the hub to exactly one leaf; hence, for every fixed string of hub insertions, the leaf expectation factorizes into a product over $c$ of one-leaf matrix elements of the form appearing in Eq.~\eqref{eq:leaf-W-def}.  Resumming the one-leaf series for each $c$ reproduces $\tau_c\exp(W_c[h_c])$ with the operator-valued source $h_c^\beta(t)=K_c^{\alpha\beta}(t)S_0^\alpha(t)$, evaluated by the linked-cluster (connected-cumulant) theorem for time-ordered moments; the hub factors extracted in this way remain ordered by the single global $\cT$ symbol, which is Eq.~\eqref{eq:exact-elimination}.  Every manipulation above is an identity of formal series coefficients; no truncation has been performed.
\end{proof}

\begin{remark}[Scope of the exactness]
\label{rem:scope}
Three delimitations of Lemma~\ref{lem:elimination} should be kept explicit:
\begin{enumerate} 
\item \underline{\emph{Single contour.}}  The statement and the proof concern the single-contour amplitude \eqref{eq:return-amplitude}, with fixed coherent boundary data on every spin.  As discussed earlier in the manuscript (in the paragraph following Eq.~\eqref{eq:return-amplitude}), the same commutation-and-factorization argument extends verbatim to the Keldysh closed-time-path contour, yielding an exact Feynman--Vernon influence functional per leaf with contour-ordered cumulants and unchanged $1/d$ counting; that doubled bookkeeping is deferred (Sec.~\ref{sec:outlook}).

\item \underline{\emph{Star versus tree.}}  The elimination step is exact whenever the eliminated subsystem couples to the remainder through a single vertex, and it therefore applies recursively on any tree.  What is special about the star is that a single step completes the calculation, so exactness can be retained at the level of closed formulas.  On a deeper tree, keeping the recursion exact would require propagating the \emph{full} cumulant functional of every eliminated subtree; a finite message must be truncated -- in this paper, at the Gaussian pair $(\mu,\kappa)$ -- at every generation, so beyond the star the scheme is approximate by construction (Sec.~\ref{sec:trees-loops}).

\item \underline{\emph{Exact rewriting versus practical scheme.}}  Even on the star, the identity \eqref{eq:exact-elimination} is the exact \emph{input} to a controlled approximation, not the computational scheme itself: the algorithm retains only the first two cumulants of each $W_c$, so the computed amplitude is approximate, with the error ordered by Proposition~\ref{prop:remainder} -- $O(1/d)$ after L0 and $O(1/d^2)$ after L0+G1.  ``Exact'' refers to the spatial elimination, which confines all approximation to the transparent, $1/d$-ordered truncation of the cumulant series.
\end{enumerate}
\end{remark}

Here \(\bn_0\in S^2\) is the Bloch-sphere direction specifying the initial and final hub coherent state \(|\bn_0\rangle\) in the return amplitude, while \(\bS_0(t)\) denotes the hub spin insertion inside the remaining time-ordered hub amplitude.  The functional \(W_c[h_c]\) is the connected time-ordered cumulant-generating functional of the \(c\)-th driven leaf, normalized by its return amplitude \(\tau_c\), and evaluated at the hub-dependent source \(h_c(t)=K_c^{\mathsf T}(t)\bS_0(t)\).  Expanding \(W_c\) gives the \(1/d\) hierarchy derived in Sec.~\ref{sec:hierarchy} and Appendix~\ref{app:cumulant-derivation}.  Its first two cumulants are precisely the one-time weak mean trajectory \eqref{eq:mu} and the connected two-time weak kernel \eqref{eq:kappa-general}.

In the static zero-field leaf case, \eqref{eq:kappa-general} reduces to the familiar spin-coherent covariance and response kernel
\begin{equation}
\kappa_c^{\alpha\beta}(t,t')=\frac14(\delta^{\alpha\beta}-\nu_c^\alpha\nu_c^\beta)
+\frac{\ii}{4}\varepsilon^{\alpha\beta\gamma}\nu_c^\gamma\,\mathrm{sgn}(t-t').
\label{eq:static-kappa}
\end{equation}
The two parts of this kernel play distinct physical roles: the real symmetric part is the coherent-state covariance -- the noise carried by the boundary state -- while the imaginary antisymmetric part, proportional to $\mathrm{sgn}(t-t')$, is the response carried by the spin commutation relations.  In the Keldysh framework these two roles are usually distributed over the closed-time-path branch structure (the Keldysh and retarded/advanced components of the contour-ordered kernel).  Here both are carried by a single kernel on the single contour, because the weak objects are conditioned on the final as well as the initial boundary state; this is the single-contour counterpart of the noise/response split, in the sense of the contour discussion following Eq.~\eqref{eq:return-amplitude} and of Remark~\ref{rem:scope}(i), where the same kernel becomes the $2\times2$ branch-block matrix of the Feynman--Vernon functional.

\section{The $1/d$ hierarchy: L0 and G1}
\label{sec:hierarchy}

We now derive the formulas used in the simulations.  The derivation has three steps.  First, the leaves are eliminated exactly as driven one-spin systems.  Second, the resulting exact leaf influence functionals are expanded in connected time-ordered cumulants.  Third, the cumulant series is truncated according to the $1/d$ counting.  Only the last step is an approximation; the preceding rewriting is exact.

Let
\begin{equation}
H_{\rm leaf}(t)=\sum_{c=1}^d \bb_c(t)\cdot\bS_c,
\qquad
V(t)=\sum_{c=1}^d S_0^\alpha K_c^{\alpha\beta}(t)S_c^\beta,
\end{equation}
so that the full star Hamiltonian \eqref{eq:general-H} splits as $H(t)=\bB_0(t)\cdot\bS_0+H_{\rm leaf}(t)+V(t)$: the first term is the driven hub, $H_{\rm leaf}$ collects the free leaf drives, and $V$ is the hub--leaf coupling that the elimination removes.

After factoring out the free leaf evolution, each leaf sees the hub through a time-dependent source.  For the \(c\)-th leaf this source is
\begin{equation}
h_c^\beta(t)=K_c^{\alpha\beta}(t)S_0^\alpha(t),
\label{eq:operator-source}
\end{equation}
with summation over repeated spin indices.  In the full star amplitude this is an
operator-valued source, because \(S_0^\alpha(t)\) acts on the hub Hilbert space;
the remaining hub operators are kept under the global time-ordering symbol.

Let \(\bnu_c\in S^2\) denote the Bloch-sphere direction of the initial and final
coherent state \(|\bnu_c\rangle\) of leaf \(c\).  The corresponding free leaf
return amplitude is the scalar defined in Eq.~(\ref{eq:tau}). For a test source \(h_c(t)\), the normalized leaf influence functional defined in Eq.~(\ref{eq:leaf-W-def}).
The functional \(W_c[h_c]\) is therefore normalized so that
\(W_c[0]=0\); its first functional derivative gives the one-time weak mean
trajectory \(\mu_c(t)\), and its second connected derivative gives the two-time
weak kernel \(\kappa_c(t,t')\).

Equivalently, one may insert spin coherent-state resolutions of the identity and
write the same one-spin object as a coherent-state path integral.  After the leaf
influence functionals are assembled into the remaining hub amplitude, the L0
sector gives the spin-Landau--Lifshitz saddle driven by the summed weak mean
field, while the higher connected cumulants describe the quantum fluctuation
sector around it.  We use the operator form because it is shorter and keeps the
spin-\(1/2\) one-spin objects exact.

The cumulant expansion of \eqref{eq:leaf-W-def} is
\begin{align}
W_c[h_c]
&= -\ii\int_0^T h_c^\beta(t)\mu_c^\beta(t)\,\dd t
-\frac12\int_0^T\!\int_0^T
h_c^\beta(t)h_c^\delta(t')\kappa_c^{\beta\delta}(t,t')\,\dd t\,\dd t'
+W_c^{(\ge 3)}[h_c],
\label{eq:W-cumulant}\\
W_c^{(\ge3)}[h_c]
&=\sum_{k\ge3}\frac{(-\ii)^k}{k!}
\int C_c^{\beta_1\cdots\beta_k}(t_1,\ldots,t_k)
\prod_{r=1}^k h_c^{\beta_r}(t_r)\,\dd t_r.
\label{eq:W-higher}
\end{align}
Here $\mu_c$ and $\kappa_c$ are exactly the weak one- and connected two-time objects in Eqs.~\eqref{eq:mu}--\eqref{eq:kappa-general}; $C_c^{\beta_1\cdots\beta_k}$ denotes the connected $k$-point weak cumulant of the driven leaf.  Substituting the first term of \eqref{eq:W-cumulant} into the hub time-ordered exponential simply shifts the hub field.  This gives the L0 effective field
\begin{equation}
B_{\rm eff}^{\alpha}(t)=B_0^\alpha(t)+\sum_{c=1}^dK_c^{\alpha\beta}(t)\mu_c^\beta(t),
\label{eq:beff}
\end{equation}
and the L0 approximation of Eq.~\eqref{eq:exact-elimination} therefore becomes
\begin{equation}
\cA_{\rm L0}(T)=\left(\prod_c\tau_c\right)
\left\langle \bn_0 \,\middle|\,
\cT\exp\left[-\ii\int_0^T \bB_{\rm eff}(t)\cdot\bS_0\,\dd t\right]
\,\middle|\, \bn_0\right\rangle.
\label{eq:L0-general}
\end{equation}
This is still a fully quantum one-spin boundary-value problem.  Its field may be complex because $\mu_c(t)$ is a weak value, not an ordinary expectation value.

To obtain G1, keep the quadratic term in \eqref{eq:W-cumulant} and evaluate its first contribution to the logarithm of the hub L0 amplitude.  Let $U_{0,{\rm L0}}$ be the L0 hub propagator and let
\begin{equation}
G_0^{\alpha\gamma}(t,t')=
\frac{\left\langle\bn_0\,\middle|\,\cT[S_{0,I}^\alpha(t)S_{0,I}^\gamma(t')]\,U_{0,{\rm L0}}(T,0)\,\middle|\,\bn_0\right\rangle}
{\left\langle\bn_0\,\middle|\,U_{0,{\rm L0}}(T,0)\,\middle|\,\bn_0\right\rangle}
\label{eq:G0-general}
\end{equation}
be the full weak hub two-point function under the L0 drive.  The first Gaussian correction to $\log\cA$ is then
\begin{equation}
\boxed{
\Delta_{\rm G1}(T)=
-\frac12\sum_{c=1}^d\int_0^T\!\int_0^T
K_c^{\alpha\beta}(t)K_c^{\gamma\delta}(t')
\kappa_c^{\beta\delta}(t,t')G_0^{\alpha\gamma}(t,t')\,\dd t\,\dd t'.}
\label{eq:G1-general}
\end{equation}
The summation convention over repeated Cartesian indices is used throughout.

The $1/d$ ordering of the hierarchy can now be stated precisely.  We formulate it as an asymptotic ordering of the cumulant series of Lemma~\ref{lem:elimination}, under explicit hypotheses, and we are careful to distinguish what is proved (the ordering of the series, term by term) from what is validated numerically (that the exact remainders follow the same orders).

\begin{proposition}[Ordering of the cumulant series]
\label{prop:remainder}
Assume, on the fixed window $[0,T]$:
\begin{enumerate}[label={\rm(H\arabic*)},itemsep=1pt,topsep=2pt]
\item bounded drives: $\bB_0$, $\bb_c$, $K_c$ are piecewise continuous with $\|K_c(t)\|\le C_K/d$ for all $c$ and $t$;
\item no weak-amplitude zeros: along the continuous branch from $\cA(0)=1$, the free leaf amplitudes and the L0 hub amplitude satisfy $|\tau_c|,\,|\langle\bn_0|U_{0,\rm L0}(T,0)|\bn_0\rangle|\ge\delta>0$, and the same lower bound holds for the intermediate boundary amplitudes defining the weak objects;
\item uniform weak-cumulant bound: the connected $k$-point weak cumulants of each leaf obey $\big|C_c^{\beta_1\cdots\beta_k}(t_1,\ldots,t_k)\big|\le k!\,M^k$ for a constant $M=M(\delta)$ independent of $c$, $k$, and $d$.
\end{enumerate}
Then the term of the expansion of $\log\cA$ generated by total leaf-cumulant order $k\ge2$ is bounded by $d\,(C_K T M/d)^k\,c_0^k=O(d^{1-k})$ with a constant $c_0$ depending only on the hub norm $\|\bS_0\|=\tfrac12$ and on $\delta$.  In particular the first term omitted by L0 is $O(1/d)$, the first term omitted by ${\rm L0{+}G1}$ -- comprising the cubic leaf cumulants and the second-order contribution of the quadratic insertion, $\log\langle e^{Q_2}\rangle_{\rm L0}-\langle Q_2\rangle_{\rm L0}=O(Q_2^2)$ -- is $O(1/d^2)$, and
\begin{equation}
\log\cA(T)-\log\cA_{\rm L0}(T)=O(1/d),
\qquad
\log\cA(T)-\log\cA_{\rm L0}(T)-\Delta_{\rm G1}(T)=O(1/d^2),
\label{eq:error-scaling}
\end{equation}
as an asymptotic ordering of the series, provided the branch of the logarithm is followed continuously from $\cA(0)=1$.
\end{proposition}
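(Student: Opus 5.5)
The plan is to start from the exact rewriting of Lemma~\ref{lem:elimination} and treat everything beyond the linear cumulant as a perturbation of the L0 hub amplitude. By \eqref{eq:W-cumulant}, the linear term of each $W_c$ is absorbed exactly into $\bB_{\rm eff}$, which gives \eqref{eq:L0-general}. Writing $Q_k=\sum_c W_c^{(k)}[K_c^{\mathsf T}\bS_0]$ for the degree-$k$ cumulant part, we then have $\cA/\cA_{\rm L0}=\langle \cT e^{Q_2+Q_3+\cdots}\rangle_{\rm L0}$. Here $\langle\cdot\rangle_{\rm L0}$ denotes the weak (two-boundary) hub average normalized by $\langle\bn_0|U_{0,\rm L0}(T,0)|\bn_0\rangle$, and all hub insertions sit under the global $\cT$ in the interaction picture of the L0 drive. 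Hypothesis (H2) makes this normalization legitimate and keeps $\log$ on its continuous branch. Taking the logarithm and applying the linked-cluster theorem for the hub (a second cumulant expansion, now in the hub insertions) expresses $\log\cA-\log\cA_{\rm L0}$ as a sum of connected hub cumulants of products of the $Q_k$'s.

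The core of the argument is a term-by-term bound. Fix a term built from leaf cumulants of orders $k_1,\dots,k_m$ attached to leaves $c_1,\dots,c_m$, with total leaf-cumulant order $k=\sum k_i$. It contains $k$ factors $K_c$, each bounded by $C_K/d$ by (H1). It also contains $k$ hub insertions $S_{0,I}$. Each of these is a $2\times2$ operator propagated by $U_{0,\rm L0}$, which is bounded on $[0,T]$ because $\bB_{\rm eff}$ is bounded: the $\mu_c$ are bounded by (H2) and $\sum_c\|K_c\|\le C_K$. Together with the $1/\delta$ normalization, the hub part is therefore at most $c_0^k$. The leaf cumulants are bounded via (H3) by $\prod k_i!\,M^{k_i}$, which cancels the $1/k_i!$ of \eqref{eq:W-higher}. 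The $k$ time integrals over $[0,T]$ give $T^k$. The remaining question is combinatorial: how many free leaf sums survive. Connectedness of the hub cumulant does not force the leaf indices to coincide. For the leading term of each order, however, one argues that only the single-leaf term ($m=1$, one $c$-sum) is of order $d(C_KTM/d)^k$. Terms with $m$ distinct blocks carry $d^m$ sums and $m$ connected hub correlations, and give $d^m\cdot d^{-k}$ with $k\ge 2m$ whenever every block is quadratic or higher. This is exactly the $O(Q_2^2)$ contribution listed in the statement, and it is $O(d^{2-4})=O(1/d^2)$. Counting in this way gives: the $k=2$, $m=1$ term is $\Delta_{\rm G1}$ of \eqref{eq:G1-general}, of order $1/d$; the terms $k=3,m=1$ and $k=4,m=2$ are $O(1/d^2)$; and every term at total order $k$ is bounded by $d^{\lfloor k/2\rfloor}d^{-k}\le d^{1-k}\cdot d^{\lfloor k/2\rfloor-1}$. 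For the headline claims we only need the first few orders. Summing them gives \eqref{eq:error-scaling} as an ordering of the series.

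The main obstacle is the count of multi-block terms. The stated bound $d(C_KTM/d)^k$ is literally correct only for single-leaf contributions. Products of several quadratic blocks produce extra powers of $d$, although these are still suppressed relative to the preceding order. I would handle this by organizing the series by the pair (number of blocks $m$, total order $k$) and showing that each such class is $O(d^{m-k})\le O(d^{-k/2})$. This suffices for both truncation statements, since the first omitted classes are $(m,k)=(1,3)$ and $(2,4)$. A second delicate point is controlling the combinatorial growth of the hub linked-cluster coefficients uniformly in $k$. I would not claim convergence. Instead, I would use $c_0^k$ together with the $k!$ from time-ordered simplices, which cancels the $1/k!$ factors, and present the result, as the statement does, as an asymptotic ordering rather than a convergent bound.
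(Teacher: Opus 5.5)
Your proposal is correct and uses the same term-by-term counting as the paper's proof sketch: a factor $(C_K/d)^k$ from (H1); hub insertions and the $1/\delta$ normalization controlled through (H2); (H3) cancelling the $1/k!$ of Eq.~\eqref{eq:W-higher}; and free sums over leaves. On the hub side you are more careful than the sketch, since you note that $U_{0,\rm L0}$ is non-unitary because $\bB_{\rm eff}$ is complex, so its norm, and hence the hub constant, depends on $T$ and $C_K/\delta$ as well as on $\|\bS_0\|$ and $\delta$.

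Your $(m,k)$ bookkeeping sharpens the paper's argument rather than departing from it. The sketch counts only single-leaf blocks and picks up multi-block terms solely through its separate $O(Q_2^2)=O(d^{-2})$ remark. So the uniform bound $d\,(C_KTM/d)^k c_0^k$ holds as written only for $m=1$, while a class with $m$ blocks is $O(d^{m-k})\le O(d^{-\lceil k/2\rceil})$. As you note, this leaves Eq.~\eqref{eq:error-scaling} intact, because the first omitted classes are $(1,3)$ and $(2,4)$.
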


\begin{proof}[Proof sketch]
For real drives the leaf propagators are unitary, so every $k$-point weak moment is a ratio of a matrix element of a product of $k$ operators of norm $\tfrac12$, sandwiched between unit vectors and unitaries, to an amplitude bounded below by $\delta$ under (H2); hence weak moments are bounded by $(1/2)^k/\delta$.  Connected cumulants are finite signed sums of products of moments over partitions, giving the factorially weighted bound (H3) with $M=M(\delta)$; the factorial is compensated by the $1/k!$ of the exponential series and the time-ordered simplex volume $T^k/k!$.  A term of total leaf-cumulant order $k$ carries $k$ powers of $K_c$ for the same leaf, i.e., a factor $\le(C_K/d)^k$, hub insertions of norm $\tfrac12$ evaluated in the weak L0 state (bounded using (H2)), and a sum over $d$ leaves.  Collecting factors gives the stated $O(d^{1-k})$ bound; Appendix~\ref{app:cumulant-derivation} records the same counting in coherent-state language, including the $O(Q_2^2)=O(d^{-2})$ estimate for the re-exponentiation error of the quadratic insertion.
\end{proof}

\begin{remark}
Proposition~\ref{prop:remainder} orders the terms of the cumulant series; it does not by itself control the resummation of the full tail, so Eq.~\eqref{eq:error-scaling} for the \emph{exact} remainders is an asymptotic statement supported here by direct numerical validation: the homogeneous scaling test of Fig.~\ref{fig:one-over-d} (fitted slopes $-1.00$ and $-2.00$ over $d\in[8,96]$) and the fully driven nested-ensemble test of Fig.~\ref{fig:driven-scaling-ensemble} (ensemble slopes $-1.05$ and $-2.03$; per-seed slopes $-1.02\pm0.26$ and $-2.00\pm0.37$).  The physical content of hypothesis (H2) -- its failure at Fisher zeros of dynamical quantum phase transitions -- was discussed in Sec.~\ref{sec:general-star}.
\end{remark}  
All numerical sections below compare exact amplitudes against Eqs.~\eqref{eq:L0-general} and \eqref{eq:G1-general}, using the branch-continuous logarithmic error natural for this cumulant hierarchy.  Fig.~\ref{fig:theory-schematic} summarizes the two levels of the construction and the one-spin objects entering each level.

\begin{figure}[H]
\centering
\includegraphics[width=0.82\textwidth]{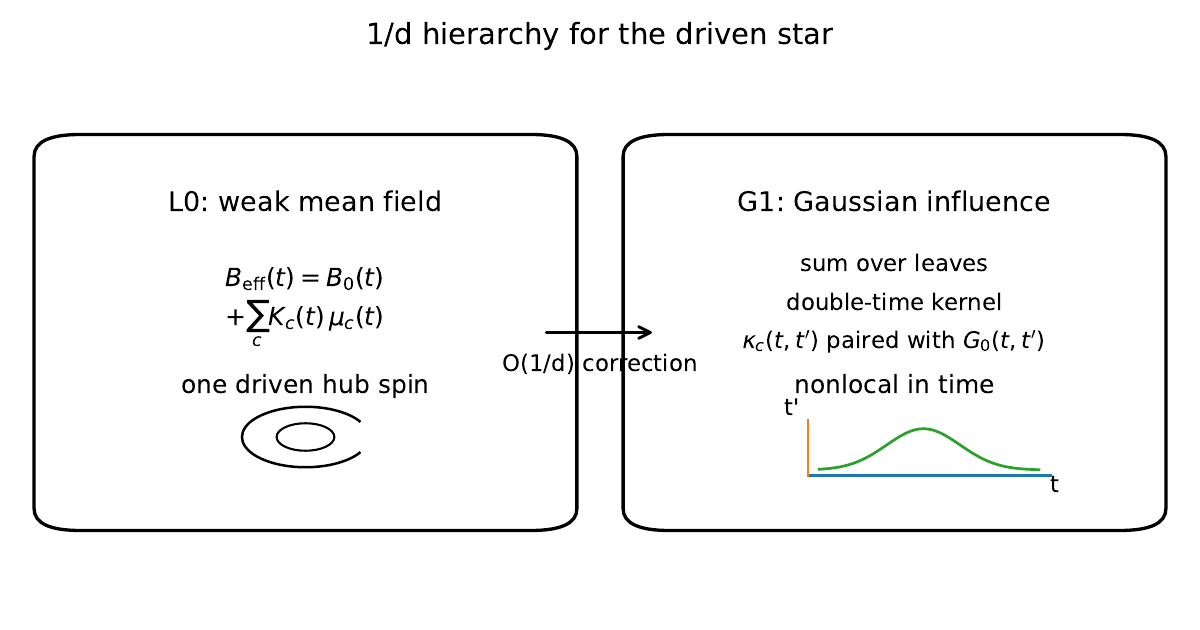}
\caption{The $1/d$ hierarchy. L0 is a mean weak field. G1 is the first Gaussian influence correction, nonlocal in time but still assembled from one-spin objects.}
\label{fig:theory-schematic}
\end{figure}

\section{Special cases and exact validation rungs}
\label{sec:special-cases}

The driven formulation above is the main theory. The static and homogeneous models are best presented as validation rungs: they give exact or semi-exact oracles against which the $1/d$ hierarchy can be tested.

\subsection{Static homogeneous star}
\label{sec:homogeneous-star}

Set $\bB_0=\bb_c=0$ and $K_c(t)=(J_0/d)I$. Then
\begin{equation}
H=\frac{J_0}{d}\sum_{c=1}^d\bS_0\cdot\bS_c.
\label{eq:homogeneous-H}
\end{equation}
Appendix~\ref{app:static-reduction} shows how Eqs.~\eqref{eq:L0-general}--\eqref{eq:G1-general} reduce to the static homogeneous formulas used in this validation rung.
For aligned leaves and a hub at polar angle $\theta$ relative to the leaf direction, the exact amplitude reduces to a two-sector interference problem with revival scale
\begin{equation}
T_{\rm rev}=\frac{4\pi d}{J_0(d+1)}.
\label{eq:revival}
\end{equation}
This case is ideal for large-$d$ scaling tests because exact amplitudes can be evaluated at very large $d$ without exponential Hilbert-space growth; Fig.~\ref{fig:aligned-benchmark} shows the corresponding exact benchmark, and the closed form is recorded in Appendix~\ref{app:oracles}.

\begin{figure}[H]
\centering
\includegraphics[width=0.82\textwidth]{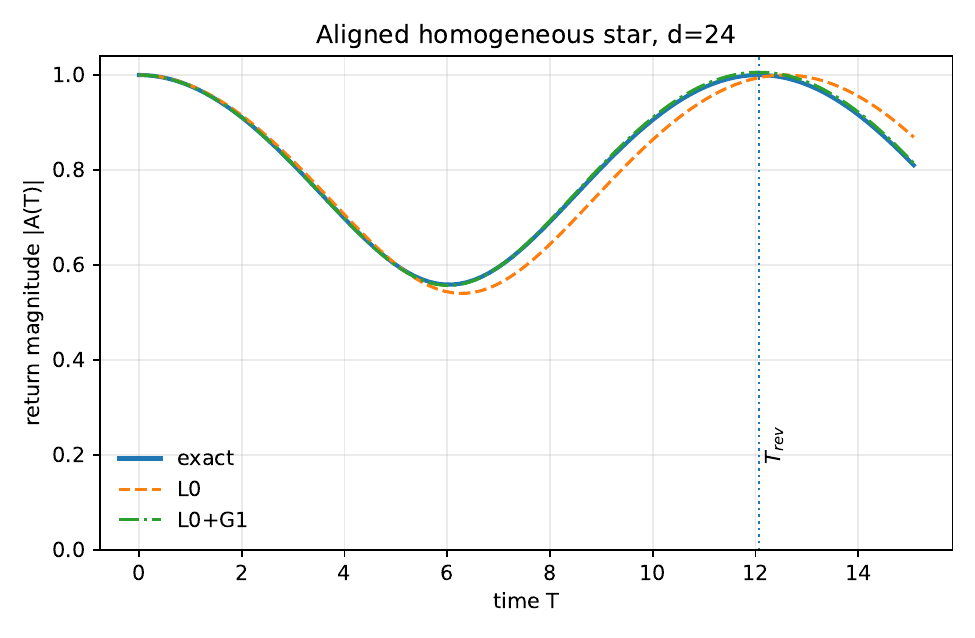}
\caption{The aligned homogeneous star gives an exact benchmark and exposes the finite-time horizon imposed by coherent revivals.}
\label{fig:aligned-benchmark}
\end{figure}

\subsection{Homogeneous arbitrary coherent states}
\label{sec:homogeneous-coherent}

The homogeneous star is more solvable than the aligned example suggests; Appendix~\ref{app:oracles} gives the corresponding Schur-block recursion. Since $H=(J_0/d)\bS_0\cdot\bm L$ with $\bm L=\sum_c\bS_c$, each total-leaf-spin sector has only two hub-leaf total-spin eigenvalues. This yields an operator identity of the form
\begin{equation}
\exp[-\ii (J_0T/d)\bS_0\cdot\bm L]=f_0(\bm L^2)+f_1(\bm L^2)\bS_0\cdot\bm L,
\label{eq:homogeneous-operator-identity}
\end{equation}
combined with a Schur-block recursion over the leaf product state. This gives a polynomial-time exact oracle for arbitrary leaf coherent states.

\subsection{Inhomogeneous static star}

Set fields and couplings time independent but allow arbitrary $J_c$, $\bb_c$, and $\bB_0$. This is no longer generally reducible to the homogeneous Schur-block oracle. Nevertheless, it is a direct specialization of the driven theory, with the one-spin propagators \eqref{eq:leaf-prop} computable analytically or numerically. It is the right intermediate validation between the exactly solvable homogeneous rung and the fully driven case.

A further middle rung is the Gaudin central-spin model, with inhomogeneous exchange and hub field but no arbitrary leaf fields. This model has integrable structure \cite{gaudin_diagonalisation_1976,bortz_exact_2007}; it may provide a large-$d$ oracle beyond brute-force exact diagonalization.

\section{Numerical validation}
\label{sec:numerics}

The presented figure set separates the philosophical benchmark from the algorithmic validation: Fig.~\ref{fig:uniform-ll-to-quantum} is the LL-to-quantum diagnostic, Fig.~\ref{fig:star-primitive} shows the influence primitive, Fig.~\ref{fig:one-over-d} tests the headline $1/d$ and $1/d^2$ scaling, Fig.~\ref{fig:static-inhom} probes a static inhomogeneous case beyond the homogeneous oracle, Figs.~\ref{fig:driven-validation}--\ref{fig:driven-scaling-ensemble} validate the fully time-dependent algorithmic setting, and Fig.~\ref{fig:tmps-baseline} benchmarks the hierarchy against a temporal matrix-product influence-matrix baseline.

\begin{figure}[H]
\centering
\includegraphics[width=0.82\textwidth]{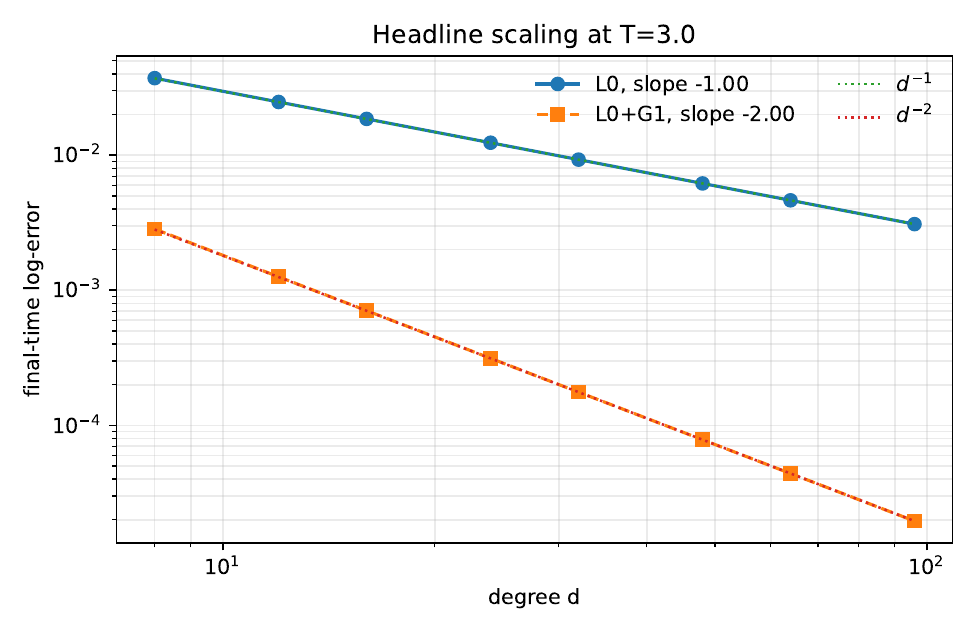}
\caption{Central scaling test. The leading error is $O(1/d)$, and the G1-corrected error is $O(1/d^2)$.}
\label{fig:one-over-d}
\end{figure}

\begin{figure}[H]
\centering
\includegraphics[width=0.82\textwidth]{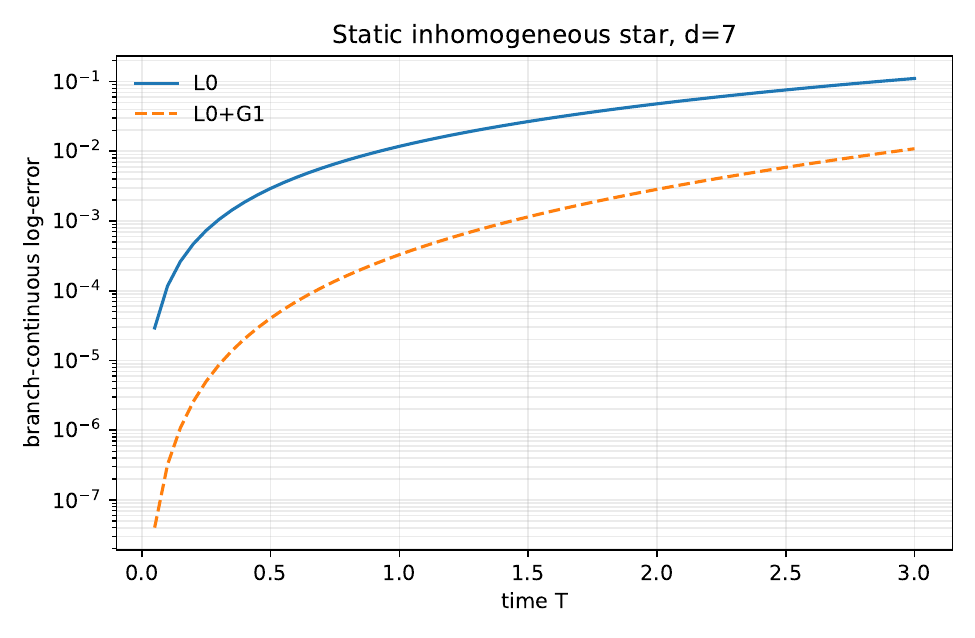}
\caption{Static inhomogeneous validation tests the hierarchy beyond the homogeneous exact oracle.  At late times, outside the asymptotic window controlled by Proposition~\ref{prop:remainder}, the L0+G1 error can cross above the L0 error; the hierarchy is an ordering of small corrections, not a variational bound.}
\label{fig:static-inhom}
\end{figure}

\begin{figure}[H]
\centering
\includegraphics[width=0.82\textwidth]{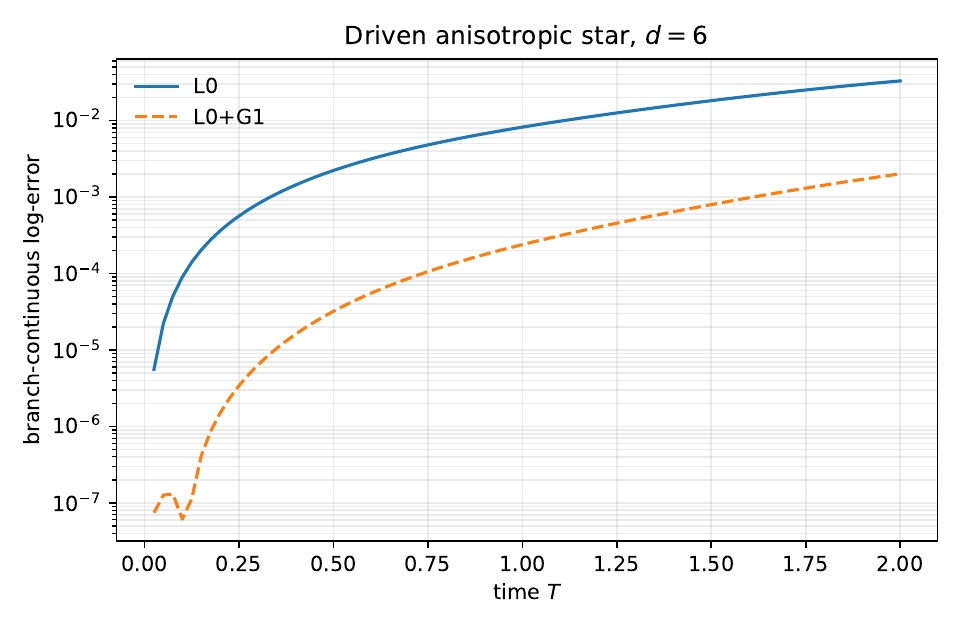}
\caption{The driven-star experiment validates the fully time-dependent formulation directly, using smooth time-dependent fields and anisotropic pair tensors.}
\label{fig:driven-validation}
\end{figure}

\subsection{Driven-star validation with nested ensembles}
\label{sec:driven-ensemble}

The fully driven test uses smooth time-dependent hub fields, leaf fields, and anisotropic pair tensors, together with a piecewise-constant midpoint exact oracle at small $d$.  Its purpose is algorithmic rather than integrable: it verifies that the general formulas \eqref{eq:L0-general}--\eqref{eq:G1-general} are operational when $\bB_0(t)$, $\bb_c(t)$, and $K_c(t)$ are genuinely time dependent.  Fig.~\ref{fig:driven-validation} shows the time-resolved log-errors for one instance.

Because the exact oracle limits the driven test to small degrees, slope fits over $d\in[4,12]$ are sensitive to instance-to-instance prefactor fluctuations if every degree draws an independent random Hamiltonian.  The scaling test therefore uses \emph{nested ensembles}: each leaf $c$ carries intrinsic parameters (orientation, field waveform, anisotropic pair-tensor pattern) drawn from a per-leaf random stream that depends only on the ensemble seed and on $c$, so that increasing $d$ \emph{adds} leaves without redrawing the existing ones, and $d$ enters only through the overall $1/d$ coupling normalization.  Fig.~\ref{fig:driven-scaling-ensemble} shows the final-time log-errors averaged over eight such ensembles: the ensemble-mean slopes are $-1.05$ for L0 and $-2.03$ for L0+G1, with per-seed slopes $-1.02\pm0.26$ and $-2.00\pm0.37$.  This is the fully driven, anisotropic counterpart of the homogeneous large-$d$ test of Fig.~\ref{fig:one-over-d}, and it directly supports the orders of Proposition~\ref{prop:remainder} in the algorithmic setting.  Appendix~\ref{app:numerics} describes the branch-continuous log-error, midpoint propagation, and ordered-triangle quadrature.

\begin{figure}[H]
\centering
\includegraphics[width=0.72\textwidth]{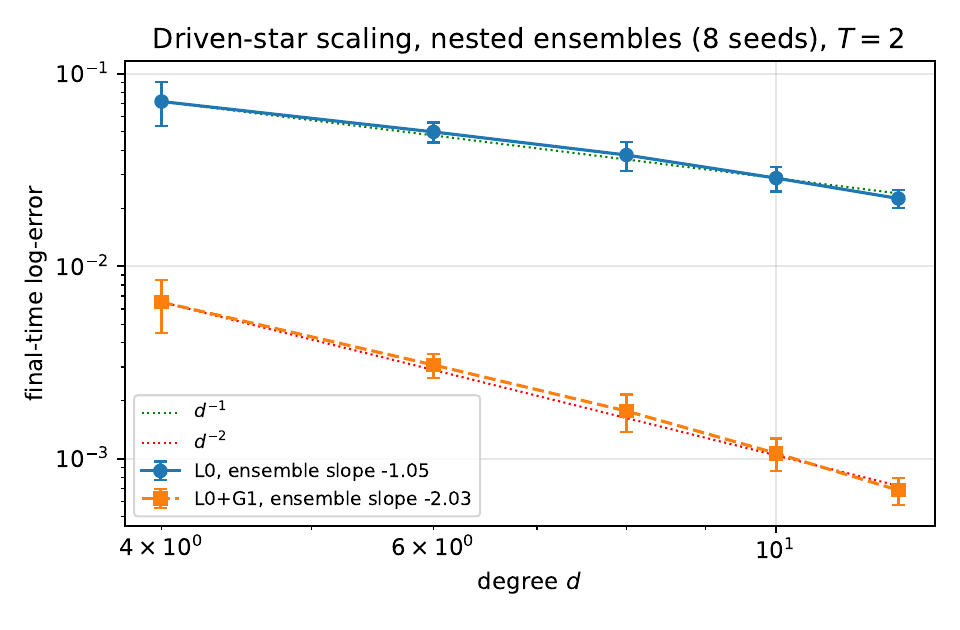}
\caption{Fully driven anisotropic star: final-time log-errors over nested ensembles (eight seeds; markers and bars show ensemble mean and standard deviation).  Increasing $d$ adds leaves to a fixed per-leaf random instance, so the fits are not contaminated by prefactor redraws.  The ensemble slopes $-1.05$ (L0) and $-2.03$ (L0+G1) match the $d^{-1}$ and $d^{-2}$ guides.}
\label{fig:driven-scaling-ensemble}
\end{figure}

\subsection{Comparison with a temporal-MPS influence-matrix baseline}
\label{sec:tmps-baseline}

The elimination-then-compression logic of this paper is shared by a mature family of tensor-network methods, and the usefulness of the Gaussian message must be calibrated against the strongest of them.  In the influence-matrix (IM) approach \cite{banuls_matrix_2009,strathearn_efficient_2018,lerose_influence_2021,sonner_influence_2021}, the effect of an eliminated environment on a retained system is encoded, after discretization of time into $n_s$ slices, in a tensor carrying one index pair per time slice of the retained system -- the influence matrix, a discrete representation of the Feynman--Vernon influence functional -- and this tensor is stored and compressed as a matrix product along the \emph{time} direction.  Appendix~\ref{app:tmps} gives a self-consistent construction, including the definition of matrix-product states and operators and of the bond dimension; here we summarize the three steps and the single control parameter of the baseline.  (i) \emph{Discretization}: the propagator \eqref{eq:propagator} is factorized by a symmetric Trotter splitting \cite{trotter_product_1959,suzuki_generalized_1976} into $n_s$ slices of one-spin hub gates and two-spin hub--leaf gates.  (ii) \emph{Exact per-leaf elimination}: contracting the world line of one leaf against its boundary state $|\bnu_c\rangle$ converts it, with no further approximation, into a matrix-product operator (MPO) acting on the hub time slices, with bond dimension two -- the dimension of the leaf Hilbert space.  (iii) \emph{Aggregation with compression}: the influences of the $d$ leaves combine by slice-wise multiplication of their MPOs, so the bond dimension of the exact combined influence grows up to $2^d$; whenever the running bond exceeds a preset maximum $\chi$, every bond is truncated to its $\chi$ largest singular values by the standard sweep algorithm recalled in Appendix~\ref{app:tmps}.  The bond dimension $\chi\in\{1,2,3,\dots\}$ is thus the single accuracy knob of the baseline: $\chi=2^d$ reproduces the discretized dynamics exactly, and decreasing $\chi$ trades accuracy for cost.  Contracting the compressed influence with the hub gates and the hub boundary states yields the IM estimate of $\cA(T)$.  Applying the identical gate sequence to the full $2^{d+1}$-dimensional statevector gives a circuit-exact reference, so the compression error can be isolated from the Trotter (discretization) error; the implementation and its unit tests are released with the code.

Fig.~\ref{fig:tmps-baseline} reports the comparison; three findings emerge.  Panel~(a) compares accuracy against message size at $d=8$, $T=2$, and reference coupling scale $g=1$.  Here and below, the dimensionless coupling scale $g$ multiplies all pair tensors of the instance uniformly, $K_c(t)\to g\,K_c(t)$, so that $\|K_c\|=O(g/d)$; the nested-ensemble validation instances of Sec.~\ref{sec:driven-ensemble} correspond to $g=1$, and panel~(c) scans $g$ at fixed $d$.  Message size counts the complex parameters of the data structure that each method delivers to the hub -- the object that would be communicated in a graphical algorithm: $\sum_k 4\chi_k\chi_{k+1}$ for the IM message and $3n_t+9n_t^2$ for the aggregate Gaussian message $\big(\bB_{\rm eff},\,\sum_c K_c\kappa_cK_c^{\mathsf T}\big)$ stored on $n_t$ time grid points; both counts are derived in Appendix~\ref{app:tmps}.  The first finding is that at this small degree the IM baseline is decisively stronger: $\chi=1,2$ are crude, $\chi=3$ is comparable to L0+G1, and $\chi=4$ -- about $6\times10^{3}$ parameters -- already reaches the Trotter floor of the discretization ($\approx10^{-4}$ in the log-amplitude at $n_s=100$ slices), whereas L0+G1 reaches $1.4\times10^{-3}$ at $\sim10^{5}$ parameters ($n_t=101$; refining the grid to $n_t=201$ does not improve it).  The accuracy of the Gaussian message at $d=8$ is limited by its $O(1/d^2)$ truncation, not by its parameter count.

The two methods are, however, truncations of the same exact influence with respect to \emph{different} expansion parameters, and this makes their regimes complementary rather than ordered.  The Gaussian message truncates the cumulant series of Lemma~\ref{lem:elimination} at second order: its error falls as $1/d^2$ at fixed coupling (Proposition~\ref{prop:remainder}) and grows as $g^3$ -- the magnitude of the first omitted cumulant -- when the overall coupling scale $g$ is raised at fixed $d$.  The IM message truncates the bond dimension: it is exact at $\chi=2^d$ for \emph{any} coupling strength -- in this precise sense it is nonperturbative in the coupling -- and the $\chi$ required for a target accuracy is set by the temporal entanglement of the influence functional, which grows with the accumulated coupling $\|K\|T$ and is largely insensitive to $d$.  Panels~(b) and~(c) measure both statements.  In the coordination scan~(b), at fixed $g=1$ over nested ensembles, the L0+G1 error falls with fitted slope $-2.0$, while the fixed-$\chi$ IM errors decay far more slowly ($\chi=2$: slope $\approx-0.9$; $\chi=3$: nearly flat around $2\times10^{-3}$).  Consequently the Gaussian message is more accurate than $\chi=2$ throughout the scanned range, overtakes $\chi=3$ near $d\approx8$, and -- extrapolating its measured $d^{-2}$ law to the $\chi=4$ compression error ($7.7\times10^{-5}$ at $d=8$, panel~(a)) -- is projected to overtake $\chi=4$ near $d\approx35$; in the window shown, the total $\chi=4$ error tracks the common Trotter floor and is discretization- rather than compression-limited.  In the coupling scan~(c), at fixed $d=8$, the L0+G1 error follows the $\propto g^3$ guide while the fixed-$\chi$ IM errors grow more slowly.  Together the two scans trace the announced crossover surface in the $(d\|K\|T)$ plane, and yield the second finding: for every fixed IM budget $\chi$ there is a coordination $d_*(\chi)$, growing with $\chi$, beyond which the Gaussian message is the more accurate primitive at fixed, $d$-independent cost; conversely, at fixed degree a sufficiently strong coupling always favors the IM message.

The high-coordination side of this complementarity has a transparent origin -- a central-limit mechanism.  As $d\to\infty$ with $K_c=O(1/d)$, the exact combined influence of the leaves approaches a \emph{Gaussian} functional: its mean part is $O(1)$, its two-time kernel is $O(1/d)$, and all higher connected contributions are $O(1/d^2)$ or smaller (Proposition~\ref{prop:remainder}).  This limiting object is exactly what the cumulant message stores in closed form, at a size independent of $d$; a fixed-rank temporal factorization must instead spend bond dimension on the nonlocal-in-time quadratic kernel, and in the measured window its error decays far more slowly than the $1/d^2$ of the Gaussian message.  The third finding is structural and matters most for graphical algorithms: the Gaussian message is \emph{additive} -- leaves and subtrees aggregate by summation of their $(\mu,\kappa)$ pairs, and the aggregate entering the hub is a single pair regardless of $d$ -- whereas IM messages compose multiplicatively, each additional member requiring an MPO multiplication followed by a fresh compression sweep.  On high-coordination trees this additivity is the structural reason to prefer the cumulant message; on small, strongly coupled clusters the IM message is the right primitive; hybrid schemes -- IM messages inside strongly coupled clusters, Gaussian messages between clusters and along high-degree links -- are the natural synthesis (Sec.~\ref{sec:outlook}).

\begin{figure}[H]
\centering
\includegraphics[width=0.98\textwidth]{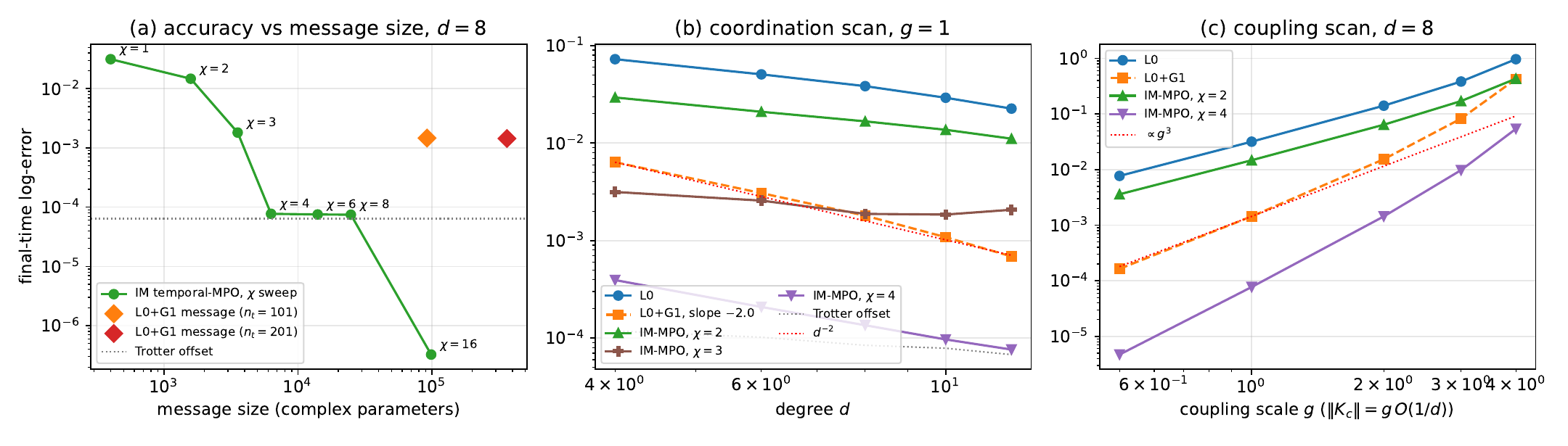}
\caption{Temporal-MPS influence-matrix (IM) baseline versus the Gaussian (L0+G1) message.  (a)~Final-time log-error against message size (complex parameters; Appendix~\ref{app:tmps}) at $d=8$, $T=2$, $g=1$.  The IM curve sweeps the bond dimension $\chi$ and is measured against the circuit-exact reference on the same Trotter grid ($n_s=100$), isolating the compression error; the dotted line marks the Trotter offset of the discretized dynamics from the continuum oracle; the L0+G1 diamonds (kernel stored on $n_t=101$ and $201$ grid points) are measured against the continuum oracle.  (b)~Coordination scan at fixed $g=1$: final-time log-errors against the common continuum oracle, averaged over four nested-ensemble seeds (Sec.~\ref{sec:driven-ensemble}).  L0+G1 falls with fitted slope $-2.0$; the fixed-$\chi$ IM errors decay slowly ($\chi=2$: slope $\approx-0.9$; $\chi=3$: nearly flat), so the Gaussian message overtakes $\chi=3$ near $d\approx8$, while the $\chi=4$ curve tracks the Trotter offset (dotted) and is discretization- rather than compression-limited in this window.  (c)~Coupling scan at fixed $d=8$: the L0+G1 error follows the $\propto g^{3}$ growth of the first omitted cumulant, while the fixed-$\chi$ IM errors grow more slowly.  Panels (b) and (c) together trace the crossover surface in the $(d\|K\|T)$ plane.}
\label{fig:tmps-baseline}
\end{figure}

\section{From stars to trees and loopy graphs}
\label{sec:trees-loops}

The star is the local update rule for a tree. A subtree rooted at $i$ sends to its parent $p$ an influence message
\begin{equation}
\mathfrak m_{i\to p}=\left(\ell_{i\to p},\;\mu_{i\to p}^\alpha(t),\;\kappa_{i\to p}^{\alpha\beta}(t,t')\right),
\label{eq:tree-message}
\end{equation}
where $\ell$ is the scalar log contribution, $\mu$ is the weak boundary-spin trajectory of the subtree root, and $\kappa$ is its connected two-time kernel. The update at a vertex is the star computation with the children playing the role of leaves. Concretely, the update at vertex $i$ with children $c\in\partial i\setminus p$ is the star computation of Secs.~\ref{sec:leaf-elimination}--\ref{sec:hierarchy} used in \emph{open} form.  The incoming children messages are aggregated into the dressed evolution of spin $i$: the weak mean trajectories shift the field of $i$ as in Eq.~\eqref{eq:beff}, and the children kernels enter as a quadratic, nonlocal-in-time insertion treated at first order, consistently with the Gaussian truncation.  The outgoing message $\mathfrak m_{i\to p}$ then consists of the same one-spin weak objects, Eqs.~\eqref{eq:tau}--\eqref{eq:kappa-general}, evaluated for spin $i$ under this dressed evolution with $i$'s own coherent boundary state $|\bnu_i\rangle$: $\ell_{i\to p}$ collects the log of the dressed return amplitude of $i$ together with the children's scalars, and $\mu_{i\to p}$, $\kappa_{i\to p}$ are the dressed weak trajectory and connected kernel.  Rather than reporting the closed number $\cA$, the vertex reports its influence as felt by the parent; the closure $\langle\bn_0|\cdots|\bn_0\rangle$ of Eq.~\eqref{eq:return-amplitude} is performed once, at the tree root.  As emphasized in Remark~\ref{rem:scope}(ii), the elimination itself remains exact at every generation, but a finite message requires truncating the cumulant functional at $(\mu,\kappa)$ each time it is passed, so beyond the single star the recursion is approximate by construction, with the per-generation error ordered by Proposition~\ref{prop:remainder} when coordination is high. For high-degree trees with $K_{ij}=O(1/d_i)$, the same cumulant counting applies generation by generation.

For bounded-degree trees, the algorithmic message-passing formulation still exists, but $1/d$ is no longer a small parameter unless replaced by weak coupling, short time, high temperature, dephasing, or another expansion parameter.

For loopy graphs, the natural extension is tree/star Gaussian message passing plus loop corrections. This connects directly to belief propagation on graphical models, generalized belief propagation, tensor-network contraction, and loop calculus \cite{bethe_statistical_1935,pearl_probabilistic_1988,yedidia_constructing_2003,wainwright_graphical_2008,chertkov_loop_2006,chertkov_loop_2006-1}. The real-time quantum setting is harder because amplitudes are complex rather than positive, so this direction should be framed as a separate research program.

\section{Discussion and outlook}
\label{sec:outlook}

The paper has two complementary messages:  
\begin{enumerate}
\item \underline{\emph{Quantum Advantage.}}  A qubit spin system is not automatically a useful quantum-simulation benchmark merely because its microscopic degrees of freedom are quantum.  For a specified observable and time window one should first ask whether a spin-LL-classical surrogate reproduces the qualitative behavior.  In the present hierarchy this surrogate is L0.  If L0 fails but L0+G1 succeeds, the dynamics is not fully classical but remains classically correctable by a controlled Gaussian quantum-sector term.  If both fail, the observable has entered a genuinely quantum regime relative to this diagnostic.  The uniform-star experiment in Fig.~\ref{fig:uniform-ll-to-quantum} is the cleanest current illustration: the homogeneous model is exactly solvable by Schur-block recursion, yet the two-population leaf state generates enough sector structure to reveal an eventual departure from the LL description on the scale where discrete-spectrum interference becomes important.

\item \underline{\emph{Algorithmic Message.}}  The fully time-dependent star in Eq.~\eqref{eq:general-H} is more important for scalable methods than the uniform benchmark.  It allows arbitrary bounded hub fields, leaf fields, and anisotropic pair tensors, and it returns precisely the objects needed for graphical propagation: a scalar contribution, a weak mean trajectory, and a connected two-time kernel.  Thus the star is a local influence primitive.  On high-degree trees, applying this primitive recursively gives a Gaussian influence-message scheme with the same $1/d$ counting generation by generation.  On bounded-degree trees the same recursion remains meaningful, but accuracy must be controlled by a different small parameter such as weak coupling, dephasing, high temperature, or a short-time expansion.

The relation to \emph{tensor networks} and \emph{graphical models} is also structural.  In classical graphical models, BP is exact on trees and approximate on loopy graphs; loop calculus expresses the exact partition function as the BP contribution plus closed-loop corrections \cite{chertkov_loop_2006,chertkov_loop_2006-1,chertkov_gauges_2020,chertkov_inferlo_2024}.  Although this construction was developed in the language of probabilities and statistical weights, its gauge-transformation core is algebraic: it reparametrizes a finite tensor contraction without changing the contracted value.  In this sense the same logic is not restricted to positive measures, and extends naturally to graphical models with real, signed, or complex entries.  Recent tensor-network work applies this BP-plus-corrections viewpoint directly to complex-amplitude contractions \cite{evenbly_loop_2026,midha_beyond_2025,midha_belief_2026}.  Our real-time spin setting adds a further layer: the messages are not finite local tables only, but time-continuous weak objects, namely weak mean trajectories and two-time weak kernels.  Nevertheless, the same organizing principle suggests a route beyond trees: start from the star/tree Gaussian message approximation, then add loop or cluster corrections when closed graphical structures are important.
\end{enumerate}

Several limitations should be kept explicit:
\begin{enumerate}
    \item  \underline{\emph{Classical substitutability}} is not an intrinsic yes/no property of a Hamiltonian.  It depends on the observable, the initial state, the time horizon, and the accuracy criterion.  Return amplitudes are phase-sensitive and therefore more demanding than low-order equal-time observables.  The branch-continuous logarithm can become singular near zeros of the weak return amplitude; since the return amplitude is a Loschmidt amplitude, these zeros are the Fisher zeros of dynamical quantum phase transitions \cite{heyl_dynamical_2013,heyl_dynamical_2018}, so hypothesis (H2) of Proposition~\ref{prop:remainder} delineates the physically expected domain of validity.  
    
    \item \underline{\emph{Baseline comparison with Tensor Network}} discussed in Sec.~\ref{sec:tmps-baseline} shows that at small degree and moderate coupling the temporal influence-matrix message is the stronger classical primitive; the case for the Gaussian message rests on its fixed size, its additivity across leaves and subtrees, and its $1/d^2$-controlled accuracy at high coordination; the case is made quantitative in Sec.~\ref{sec:tmps-baseline} and distilled into the selection rule below.

\item \underline{\emph{Focus on Methodology.}} The statement that a qubit scheme can be tested against a spin-LL surrogate is a methodology, not a universal theorem: the surrogate can be derived broadly, but perturbative control must be established case by case.
\end{enumerate} 

The baseline comparison also yields the clearest \underline{algorithmic message of the paper} -- \emph{a quantitative selection rule for real-time spin message passing}.  The two message types truncate the same exact influence in different expansion parameters: influence matrices in temporal entanglement, which grows with the accumulated coupling $\|K\|T$ and is insensitive to coordination; Gaussian cumulant messages in $1/d$, insensitive to bond-dimension-type resources but perturbative in the coupling.  Accordingly, influence-matrix messages should be used inside small or strongly coupled clusters, where non-perturbative accuracy in the coupling is required; Gaussian messages should be used across high-coordination, weakly coupled links, where their error falls as $1/d^2$ at fixed, additive cost while fixed-$\chi$ influence matrices stagnate; and the measured crossover surface $d_*(\chi)$ of Fig.~\ref{fig:tmps-baseline} locates the boundary between the two regimes (the Gaussian message is more accurate than bond $\chi=2$ throughout the scanned range, overtakes $\chi=3$ near $d\approx8$, and is projected to overtake $\chi=4$ near $d\approx35$; at fixed degree, a sufficiently strong coupling always favors the influence matrix).  Table~\ref{tab:message-comparison} summarizes the structural side of the same comparison.  The synthesis -- hybrid message passing with influence-matrix messages within clusters and Gaussian messages between them -- is in our view the most promising purely algorithmic continuation of Sec.~\ref{sec:trees-loops}.

The next scientific steps are therefore clear: 
\begin{enumerate}
\item Sharpen the uniform-star LL-to-quantum transition into a robust diagnostic figure, possibly with disorder ensembles and alternative initial leaf distributions.

\item Use the fully driven star to build and test an explicit tree-message prototype.

\item Compare the resulting tree approximation against temporal-MPS influence-matrix and exact small-system oracles, extending the star-level comparison of Sec.~\ref{sec:tmps-baseline} to the tree level, where the additivity of the Gaussian message and the bond growth of multiplied influence matrices compete directly; the selection rule above, calibrated by the measured crossover surface, tells the prototype when to switch message types.
\end{enumerate} 

\section*{Data and Code Availability}

The source code used to generate the numerical results in this work, together with the scripts that produce the figures and the data files generated by those scripts, are available in a public GitHub repository at \url{https://github.com/mchertkov/QuantumStar}.  The repository includes notebooks reproducing every figure.

\section*{Funding}

The author acknowledges partial support from University of Arizona start-up funds and from a fellowship of the Alexander von Humboldt Foundation.

\section*{Acknowledgments}

The author thanks I. Klich for useful discussions and literature pointers.

\section*{Use of AI-Assisted Tools}

Large language models (Claude, Anthropic; ChatGPT, OpenAI) assisted with text editing and code refactoring; all mathematical derivations, scientific claims, and code were independently verified by the author.


\appendix

\section{One-spin weak objects}
\label{app:weak-objects}

All ingredients of the star hierarchy reduce to boundary-value objects of a single driven spin.  Let
\begin{equation}
P(t)=U(t,0),
\qquad
P(T)=U(T,0),
\qquad
U(T,t)=P(T)P(t)^{-1},
\end{equation}
where $U$ is generated by a possibly complex field $\bb(t)$ through $\dot U=-\ii\,\bb(t)\cdot\bS\,U$.  For a coherent boundary state $|\psi\rangle=|\bnu\rangle$, define
\begin{equation}
\tau=\langle\psi|P(T)|\psi\rangle.
\end{equation}
The weak one-point trajectory is
\begin{equation}
\mu^\alpha(t)=
\frac{\left\langle\psi\,\middle|\,P(T)P(t)^{-1}S^\alpha P(t)\,\middle|\,\psi\right\rangle}{\tau}.
\label{eq:app-mu}
\end{equation}
For $t_i\ge t_j$, the full time-ordered weak two-point function is
\begin{equation}
G_>^{\alpha\beta}(t_i,t_j)=
\frac{\left\langle\psi\,\middle|\,P(T)P(t_i)^{-1}S^\alpha P(t_i)P(t_j)^{-1}S^\beta P(t_j)
\,\middle|\,\psi\right\rangle}{\tau}.
\label{eq:app-Ggt}
\end{equation}
The connected kernel entering the influence functional is
\begin{equation}
\kappa_>^{\alpha\beta}(t_i,t_j)=G_>^{\alpha\beta}(t_i,t_j)-\mu^\alpha(t_i)\mu^\beta(t_j),
\qquad t_i\ge t_j,
\label{eq:app-kappa-gt}
\end{equation}
and the $t_i<t_j$ branch is obtained by time ordering, i.e. by interchanging both the operator labels and the time arguments.  Numerically, the code stores the forward propagators $P(t_i)$ on a grid and constructs the inverse matrices $P(t_i)^{-1}$.  Since these are $2\times2$ matrices, the cost of producing all one- and two-time objects is $O(n_t^2)$ per leaf and is negligible compared with exact many-spin evolution.

In the static zero-field leaf case, $P(t)=I$ and $\tau=1$.  For a spin coherent state polarized along $\bnu$,
\begin{equation}
\mu^\alpha(t)=\frac{\nu^\alpha}{2},
\end{equation}
and the connected kernel reduces to
\begin{equation}
\kappa^{\alpha\beta}(t,t')=\frac14\left(\delta^{\alpha\beta}-\nu^\alpha\nu^\beta\right)
+\frac{\ii}{4}\varepsilon^{\alpha\beta\gamma}\nu^\gamma\,\mathrm{sgn}(t-t'),
\end{equation}
which is Eq.~\eqref{eq:static-kappa}.  The real symmetric part is the coherent-state covariance; the imaginary antisymmetric part is the response carried by spin commutation relations.

\section{Cumulant and coherent-state derivation of the hierarchy}
\label{app:cumulant-derivation}

This appendix records the derivation behind Sec.~\ref{sec:hierarchy}.  Write the Hamiltonian as
\begin{equation}
H(t)=\bB_0(t)\cdot\bS_0+H_{\rm leaf}(t)+V(t),
\qquad
H_{\rm leaf}(t)=\sum_c\bb_c(t)\cdot\bS_c,
\qquad
V(t)=\sum_c S_0^\alpha K_c^{\alpha\beta}(t)S_c^\beta.
\end{equation}
In the interaction picture with respect to the free leaves, the $c$-th leaf contribution is the time-ordered generating functional
\begin{equation}
Z_c[h]=\left\langle\bnu_c\,\middle|\,
\cT\exp\left[-\ii\int_0^T h^\beta(t)S_{c,I}^\beta(t)\,\dd t\right]
U_c(T,0)
\,\middle|\,\bnu_c\right\rangle,
\qquad
h^\beta(t)=K_c^{\alpha\beta}(t)S_0^\alpha(t).
\end{equation}
Its normalized logarithm $W_c[h]=\log Z_c[h]-\log Z_c[0]$ is the connected weak cumulant generator.  Functional differentiation gives
\begin{align}
\frac{\delta W_c}{\delta[-\ii h^\beta(t)]}\bigg|_{h=0}&=\mu_c^\beta(t),\\
\frac{\delta^2 W_c}{\delta[-\ii h^\beta(t)]\delta[-\ii h^\delta(t')]}\bigg|_{h=0}&=\kappa_c^{\beta\delta}(t,t'),
\end{align}
with $\mu_c$ and $\kappa_c$ defined in Eqs.~\eqref{eq:mu}--\eqref{eq:kappa-general}.  Hence
\begin{equation}
W_c[h]= -\ii\int h\cdot\mu_c
-\frac12\int\!\int h^\beta(t)h^\delta(t')\kappa_c^{\beta\delta}(t,t')\,\dd t\,\dd t'
+W_c^{(\ge3)}[h].
\end{equation}
Substituting the operator-valued source $h_c=K_c^{\mathsf T}S_0$ and keeping the linear term exactly gives the L0 hub problem.  The quadratic insertion is
\begin{equation}
Q_2=-\frac12\sum_c\int\!\int
K_c^{\alpha\beta}(t)K_c^{\gamma\delta}(t')
\kappa_c^{\beta\delta}(t,t')
S_0^\alpha(t)S_0^\gamma(t')\,\dd t\,\dd t'.
\end{equation}
Taking its weak L0 expectation gives Eq.~\eqref{eq:G1-general}.  Corrections from $\log\langle e^{Q_2}\rangle_{\rm L0}-\langle Q_2\rangle_{\rm L0}$ are $O(Q_2^2)=O(d^{-2})$, the same order as the cubic leaf cumulants, and are therefore beyond G1.

The coherent-state representation gives the same construction in a more geometric language.  Inserting spin coherent-state resolutions of the identity yields schematically
\begin{equation}
\cA(T)=\int_{\bn_i(0)=\bn_i(T)}\prod_i {\cal D}\bn_i\;
\exp\left(\ii\sum_i S_{\rm B}[\bn_i]-\ii\int_0^T H_{\rm cl}(\bn_0,\ldots,\bn_d;t)\,\dd t\right),
\end{equation}
where $S_{\rm B}$ is the Berry-phase term and $H_{\rm cl}$ is the coherent-state symbol of the Hamiltonian.  Integrating out the leaf paths produces an effective hub action
\begin{equation}
S_{\rm eff}[\bn_0]=S_{\rm B}[\bn_0]-\int_0^T\bB_0(t)\cdot\bs_0(t)\,\dd t
+\sum_c W_c[K_c^{\mathsf T}\bs_0],
\end{equation}
with $\bs_0$ the coherent-state spin vector on the hub path.  The L0 saddle obeys the complex boundary-value Landau--Lifshitz equation
\begin{equation}
\dot{\bs}_0(t)=\bB_{\rm eff}(t)\times\bs_0(t),
\end{equation}
while retaining the exact $2\times2$ hub propagator in the main text is the spin-$1/2$ version of the same L0 sector.  The G1 term is the first connected fluctuation of the eliminated leaf paths.  Thus the LL language and the operator weak-value language are two representations of the same hierarchy.

\section{Reduction to the static homogeneous formulas}
\label{app:static-reduction}

Set $\bB_0=\bb_c=0$ and $K_c(t)=(J_0/d)I$.  Then $\tau_c=1$ and $\mu_c=\bnu_c/2$.  The L0 field becomes
\begin{equation}
\bB_{\rm eff}=\frac{J_0}{2d}\sum_{c=1}^d \bnu_c\doteq  \bar{\bB},
\end{equation}
so that
\begin{equation}
\cA_{\rm L0}(T)=
\left\langle\bn_0\,\middle|\,\exp\left[-\ii T\,\bar{\bB}\cdot\bS_0\right] \,\middle|\,\bn_0\right\rangle.
\end{equation}
Writing $\bar B=|\bar{\bB}|$ and $\hat B=\bar{\bB}/\bar B$, this one-spin amplitude is
\begin{equation}
\cA_{\rm L0}(T)=
\cos\frac{\bar B T}{2}-\ii(\bn_0\cdot\hat B)\sin\frac{\bar B T}{2}.
\end{equation}
The G1 correction reduces to
\begin{equation}
\Delta_{\rm G1}(T)=
-\frac12\left(\frac{J_0}{d}\right)^2
\sum_{c=1}^d\int_0^T\!\int_0^T
\kappa_c^{\alpha\beta}(t,t')G_0^{\alpha\beta}(t,t')\,\dd t\,\dd t',
\end{equation}
with $\kappa_c$ given by Eq.~\eqref{eq:static-kappa} and $G_0$ computed under the constant L0 hub field $\bar{\bB}$.  This is the formula used in the homogeneous scaling and uniform-star LL-to-quantum experiments.

\section{Exact oracles}
\label{app:oracles}

\subsection{Aligned homogeneous star}

For aligned leaves the leaf system remains in the irreducible spin sector $L=d/2$.  Let
\begin{equation}
\lambda=\frac{J_0}{d},
\qquad
H=\lambda\,\bS_0\cdot\bm L,
\qquad
\bm L=\sum_{c=1}^d\bS_c.
\end{equation}
The two total-spin sectors $j=L\pm1/2$ have eigenvalues of $\bS_0\cdot\bm L$
\begin{equation}
x_+=\frac{L}{2},
\qquad
x_-=-\frac{L+1}{2},
\end{equation}
so the corresponding energies are $E_\pm=\lambda x_\pm$.  If the hub makes angle $\theta$ with the common leaf direction, define
\begin{equation}
c^2=\frac{1+\cos\theta}{2},
\qquad
s^2=\frac{1-\cos\theta}{2}.
\end{equation}
The exact return amplitude is
\begin{equation}
\cA_{\rm al}(T)=
c^2 e^{-\ii E_+T}
+s^2\left[\frac{1}{2L+1}e^{-\ii E_+T}+\frac{2L}{2L+1}e^{-\ii E_-T}\right].
\label{eq:app-aligned-amplitude}
\end{equation}
The relative phase has frequency
\begin{equation}
E_+-E_- = \frac{J_0(d+1)}{2d},
\end{equation}
which gives the revival scale quoted in the main text,
\begin{equation}
T_{\rm rev}=\frac{4\pi d}{J_0(d+1)}.
\end{equation}

\subsection{Schur-block oracle for arbitrary leaf coherent states}

For arbitrary leaf coherent states the homogeneous star is still polynomially solvable.  The leaf product density matrix is decomposed into Schur blocks $M_L$, one block for each total leaf spin $L$, with multiplicities already traced out.  The recursion starts from the first leaf,
\begin{equation}
M_{1/2}^{(1)}=|\bnu_1\rangle\langle\bnu_1|.
\end{equation}
When adding leaf $m+1$, one combines every existing block with the next spin-$1/2$ density matrix and projects onto the allowed sectors $L'=L\pm1/2$:
\begin{equation}
M_{L'}^{(m+1)}=
\sum_{L:\,L'=L\pm 1/2}
C_{L\to L'}\left(M_L^{(m)}\otimes |\bnu_{m+1}\rangle\langle\bnu_{m+1}|\right)C_{L\to L'}^\dagger.
\label{eq:app-schur-recursion}
\end{equation}
Terms with inadmissible $L'$ are omitted.  Here $C_{L\to L'}:V_L\otimes\mathbb C^2\to V_{L'}$ is the Clebsch--Gordan isometry.  Equivalently, the implementation initializes all new blocks to zero and accumulates the allowed contributions.  Normalization gives $\sum_L\operatorname{Tr}M_L=1$.

In a fixed $L$ sector the operator $X=\bS_0\cdot\bm L$ has only the two eigenvalues $x_\pm$.  Hence
\begin{equation}
e^{-\ii\lambda T X}= f_L(T)I+g_L(T)X,
\qquad
 g_L(T)=\frac{e^{-\ii\lambda T x_+}-e^{-\ii\lambda T x_-}}{x_+-x_-},
\qquad
 f_L(T)=e^{-\ii\lambda T x_+}-g_L(T)x_+.
\end{equation}
Tracing over the hub coherent state and the leaf Schur blocks gives
\begin{equation}
\cA_{\rm Schur}(T)=
\sum_L\left[
f_L(T)\operatorname{Tr}M_L
+\frac{g_L(T)}{2}\sum_{\alpha=1}^3 n_0^\alpha\operatorname{Tr}\left(M_L L^\alpha\right)
\right].
\label{eq:app-schur-amplitude}
\end{equation}
This is the exact oracle used for Fig.~\ref{fig:uniform-ll-to-quantum}.  It avoids the $2^{d+1}$ Hilbert space and is the reason the uniform-star LL-to-quantum diagnostic can be run at degrees much larger than brute-force exact diagonalization.

\subsection{Sparse static and driven oracles}

For inhomogeneous or fully driven stars no comparable Schur reduction is available generically.  The validation code therefore uses sparse Krylov propagation for small $d$.  In the static case it evaluates
\begin{equation}
\cA(T)=\langle\Psi_0|e^{-\ii HT}|\Psi_0\rangle
\end{equation}
by applying $e^{-\ii HT}$ to $|\Psi_0\rangle$ with a sparse exponential action.  In the driven case the time interval is split into midpoint steps,
\begin{equation}
U(T,0)\approx \prod_{k=1}^{n_t-1}
\exp\left(-\ii(t_{k+1}-t_k)H\left(\frac{t_{k+1}+t_k}{2}\right)\right),
\end{equation}
and each exponential action is computed by Krylov propagation.  This oracle is exponential in $d$, but it is sufficient for checking the fully driven L0/G1 formulas at $d\lesssim 10$--$12$.

\section{Numerical details}
\label{app:numerics}

All plotted errors use the branch-continuous logarithm
\begin{equation}
\log_{\rm cont}\cA(t_i)=\log|\cA(t_i)|+\ii\,\operatorname{unwrap}\arg\cA(t_i),
\end{equation}
starting from $\cA(0)=1$.  The error plotted in the figures is
\begin{equation}
E_{\rm L0}(t_i)=\left|\log_{\rm cont}\cA_{\rm exact}(t_i)-\log_{\rm cont}\cA_{\rm L0}(t_i)\right|,
\end{equation}
and similarly for L0+G1.  The logarithmic error is the natural diagnostic for the cumulant hierarchy, because L0 and G1 approximate $\log\cA$ rather than $\cA$ additively.  The validity horizons of Fig.~\ref{fig:uniform-ll-to-quantum} are defined from this error: $T_\varepsilon$ is the first time at which the branch-continuous log-error exceeds the threshold $\varepsilon=0.1$, scanned over the window $T\le 2d$ (in units of $1/J_0$, with $J_0=1$) for $d\in[12,64]$; the reported exponents are least-squares fits of $\log T_\varepsilon$ against $\log d$.

The G1 double integral is evaluated on the ordered triangle $t_i\ge t_j$.  The two branches of the time-ordered integrand are smooth on the closed triangle, and a trapezoidal rule is applied in the outer and inner variables.  This avoids differentiating across the time-ordering discontinuity.  The implementation stores two-time kernels explicitly, with memory $O(d n_t^2)$ for the star.  For larger trees one should compress these kernels by low-rank, spline, Chebyshev, or causal Volterra representations.

The driven-star scaling test of Fig.~\ref{fig:driven-scaling-ensemble} uses the nested-ensemble protocol of Sec.~\ref{sec:driven-ensemble}: per-leaf random streams keyed by (seed, leaf index), final time $T=2$, kernels on $n_t=161$ grid points (the G1 quadrature is converged to eight digits already at $n_t=81$), and the sparse midpoint oracle as the exact reference up to $d=12$.  Errors are averaged arithmetically over eight seeds; per-seed slope scatter is reported in Sec.~\ref{sec:driven-ensemble}.

\section{Temporal influence-matrix baseline}
\label{app:tmps}

This appendix gives a self-consistent construction of the baseline of Sec.~\ref{sec:tmps-baseline}.  We first recall the matrix-product representation itself, then present the four steps of the algorithm -- discretization, exact per-leaf influence MPO, compressed aggregation, and closure -- and end with the verification protocol and the parameter counting used in Fig.~\ref{fig:tmps-baseline}.

\paragraph{Matrix-product representations along the time direction.}
A matrix-product state (MPS) over $n$ sites with local dimension $q$ is a factorization of a rank-$n$ tensor into a product of site tensors,
\begin{equation}
T^{s_1s_2\cdots s_n}
=\sum_{a_1,\dots,a_{n-1}}
A^{s_1}_{1\,a_1}A^{s_2}_{a_1a_2}\cdots A^{s_n}_{a_{n-1}\,1},
\qquad s_k\in\{1,\dots,q\},
\label{eq:mps-def}
\end{equation}
where the auxiliary indices $a_k$ are called bond indices and the largest of their ranges, $\chi=\max_k\dim(a_k)$, is the \emph{bond dimension} of the representation.  The two integers -- $q$ and $\chi$ -- play different roles: the local dimension $q$ counts the physical states per site and is fixed by the model, while the bond dimension $\chi$ counts the auxiliary states carried between neighboring sites and is the adjustable resource (and accuracy knob) of the representation.  Every tensor is exactly of the form \eqref{eq:mps-def} for $\chi$ large enough.  Conversely, the optimal rank-$\chi$ truncation of any single bond is obtained by cutting the chain at that bond, viewing $T$ as a matrix between the left and right index groups, and keeping the $\chi$ largest singular values; sweeping this truncation along a suitably gauged (canonicalized) chain is the standard MPS compression algorithm \cite{verstraete_matrix_2008,orus_practical_2014,cirac_matrix_2021}.  A matrix-product operator (MPO) is the same construction for a tensor carrying a \emph{pair} of indices (incoming and outgoing) per site, i.e., an MPS of local dimension $q^2$.  In the temporal, influence-matrix use of these representations \cite{banuls_matrix_2009,strathearn_efficient_2018,lerose_influence_2021,sonner_influence_2021}, the ``sites'' are the time slices of a retained system, and the represented tensor is the discretized influence functional of an eliminated environment; the bond dimension required for a target accuracy then measures the \emph{temporal entanglement} of that influence.  In our case the retained system is the hub, each site carries the incoming and outgoing hub indices of one Trotter slice ($q^2=4$), and the environment is the set of leaves.  The matrix-product representation is logically independent of how the discrete tensor is produced; the discretization is the separate first step of the algorithm, to which we now turn.

\paragraph{Step 1: Trotter discretization.}
The propagator \eqref{eq:propagator} is factorized by a symmetric (Strang) splitting \cite{trotter_product_1959,suzuki_generalized_1976},
\begin{align}
U(T,0) & \approx\prod_{k=1}^{n_s}
U_0\!\left(\tfrac{\Delta t}{2};t_k\right)
\left[\prod_{c=1}^d G_c(\Delta t;t_k)\right]
U_0\!\left(\tfrac{\Delta t}{2};t_k\right),
\label{eq:strang}\\
G_c(\Delta t;t_k) & =\exp\!\left(-\ii\,\Delta t\left(\bS_0^{\mathsf T}K_c(t_k)\bS_c+\bb_c(t_k)\cdot\bS_c\right)\right),
\label{eq:pair-gate}
\end{align}
with $\Delta t=T/n_s$, all drives evaluated at the slice midpoints $t_k$, and $U_0$ the one-spin hub gate generated by $\bB_0(t_k)$.  This step defines the finite discrete dynamics -- the object that the tensor-network representation will treat -- and carries the usual $O(\Delta t^2)$ splitting error per slice, measured below as the Trotter offset.

\paragraph{Step 2: exact influence MPO of one leaf.}
Reshape the two-spin gate \eqref{eq:pair-gate} as a matrix in the leaf space whose entries are one-spin hub operators,
\begin{equation}
\big[W_c(k)\big]^{\,s_k's_k}_{\,l'l}
=\big\langle l'\,s_k'\,\big|\,G_c(\Delta t;t_k)\,\big|\,l\,s_k\big\rangle,
\label{eq:leaf-mpo-tensor}
\end{equation}
where $s_k,s_k'$ are the incoming/outgoing hub indices of slice $k$ (the physical indices) and $l,l'\in\{\uparrow,\downarrow\}$ label the leaf basis (the bond indices).  Chaining the tensors \eqref{eq:leaf-mpo-tensor} over $k$ contracts the leaf indices between consecutive slices -- the leaf world line -- and the boundary vectors $|\bnu_c\rangle$, $\langle\bnu_c|$ are absorbed into the first and last tensors.  The result is, with \emph{no} approximation, an MPO over the hub time slices whose bulk bond dimension equals two, the dimension of the leaf Hilbert space: a single leaf has exactly this much temporal entanglement to offer.

\paragraph{Step 3: aggregation and compression.}
The influences of the $d$ leaves act on the hub within each slice, so the combined influence is the slice-wise product of the $d$ MPOs: physical (hub) indices are matrix-multiplied, bond indices are stacked, and the bond dimension multiplies -- $2^m$ after $m$ leaves, up to $2^d$ for the full star.  This exponential growth is tamed by compression.  Whenever the running bond exceeds the preset maximum $\chi$, the MPO is viewed as an MPS of local dimension $4$ and compressed by the standard two-sweep algorithm: a first sweep of QR factorizations puts the chain into canonical gauge (so that subsequent local truncations are quasi-optimal globally), and a second sweep performs, bond by bond, the singular-value decomposition and retains the $\chi$ largest singular values, absorbing the discarded factors into the neighboring tensor.  This is precisely how the truncation from bond $2^m$ down to $\chi=1,2,3,4,\dots$ quoted in the main text is performed; the discarded singular-value weight is monitored.  The cost of one multiply-and-compress cycle is $O(n_s\chi^3)$, and $d$ cycles are needed for the star.

\paragraph{Step 4: closure.}
After all leaves are absorbed, the hub half-step gates $U_0(\Delta t/2;t_k)$ of Eq.~\eqref{eq:strang} are applied to the physical legs of each slice tensor, and the amplitude estimate is obtained by contracting the resulting MPO with the hub boundary states $\langle\bn_0|$, $|\bn_0\rangle$ at the two temporal ends, at cost $O(n_s\chi^2)$.

\paragraph{Verification.}
Two unit tests fix the implementation.  First, at $\chi=2^d$ no truncation ever occurs, and the MPO contraction must equal the amplitude obtained by applying the identical gate sequence \eqref{eq:strang} to the full $2^{d+1}$-dimensional statevector; the released code reproduces this circuit-exact reference to machine precision.  Second, the circuit-exact amplitude converges to the continuum sparse oracle as $n_s$ grows; at $n_s=100$, $d=8$, $g=1$ the Trotter offset is $6\times10^{-5}$ in the log-amplitude -- the dotted floor in panels (a) and (b) of Fig.~\ref{fig:tmps-baseline}.  Compression errors in panel (a) are measured against the circuit-exact reference (isolating them from the Trotter offset), while the coordination scan of panel (b) reports total errors of all methods against the common continuum oracle, averaged over four nested-ensemble seeds (Sec.~\ref{sec:driven-ensemble}) at $g=1$, $n_s=100$, $n_t=201$.

\paragraph{Parameter counting.}
The IM message is the list of slice tensors $\{W_k\}$: each is a $\chi_k\times\chi_{k+1}$ matrix of $2\times2$ hub blocks, i.e., $4\chi_k\chi_{k+1}$ complex numbers, for a total of $\sum_k 4\chi_k\chi_{k+1}\approx 4n_s\chi^2$ in the bulk.  The aggregate Gaussian message consists of the effective field $\bB_{\rm eff}(t)$ -- three complex components (the weak means are complex) on $n_t$ grid points, i.e., $3n_t$ numbers -- and the aggregated kernel $\sum_cK_c\kappa_cK_c^{\mathsf T}$ -- a complex $3\times3$ matrix on the $n_t\times n_t$ time grid, i.e., $9n_t^2$ numbers -- for a total of $3n_t+9n_t^2$: $9.2\times10^4$ at $n_t=101$ and $3.6\times10^5$ at $n_t=201$, compared with $6.3\times10^3$ for the IM message at $\chi=4$, $n_s=100$.  Two qualifications keep this comparison fair.  The Gaussian count is an uncompressed upper bound: smooth kernels admit the low-rank, spline, Chebyshev, or causal Volterra compressions discussed in Appendix~\ref{app:numerics}.  And, most importantly, the Gaussian count does not grow under aggregation -- leaves and subtrees are absorbed by \emph{adding} their $(\mu,\kappa)$ contributions into the same fixed-size structure -- whereas IM messages compose by MPO multiplication followed by renewed compression.

\end{document}